\documentclass{article}

\usepackage{microtype}
\usepackage{graphicx}
\usepackage{subcaption}
\usepackage{booktabs} 
\usepackage{float}    
\usepackage{placeins} 
\usepackage{wrapfig}
\usepackage{subcaption}
\usepackage{caption}
\usepackage[preprint]{Formatting_Instructions_For_NeurIPS_2026/neurips_2026}

\usepackage{amsmath}
\usepackage{amssymb}
\usepackage{amsfonts}
\usepackage{algorithm}
\usepackage{algorithmic}
\usepackage{mathtools}
\usepackage{amsthm}
\usepackage{booktabs}
\usepackage{multirow}
\usepackage{listings}
\usepackage{xcolor}
\usepackage{colortbl}
\usepackage{subcaption}
\usepackage{graphicx}
\usepackage[utf8]{inputenc}
\usepackage[T1]{fontenc}
\usepackage{hyperref}
\usepackage{url}
\usepackage{nicefrac}
\usepackage{float}    
\usepackage{makecell}
\usepackage{siunitx}
\usepackage[capitalize,noabbrev]{cleveref}

\theoremstyle{plain}
\newtheorem{theorem}{Theorem}[section]

\theoremstyle{definition}

\newtheorem{assumption}[theorem]{Assumption}
\theoremstyle{remark}
\newtheorem{remark}[theorem]{Remark}

\usepackage[textsize=tiny]{todonotes}

\newcommand{\methodname}{MAT-Cell}

\title{\methodname: A Multi-Agent Tree-Structured Reasoning Framework for Batch-Level Single-Cell Annotation}
\author{
  \makebox[\textwidth][c]{%
    \hspace*{-0.6em}%
    \begin{minipage}{0.95\textwidth}
      \centering
      Yehui Yang$^{1,2,3,4}$ \quad Zelin Zang$^{1,3}$ \quad Changxi Chi$^{1,3}$ \quad Jingbo Zhou$^{1}$ \quad Xienan Zheng$^{4}$ \quad Yuzhe Jia$^{4}$ \\
      Chang Yu$^{1}$ \quad Jinlin Wu$^{1}$ \quad Fuji Yang$^{3}$ \quad Jiebo Luo$^{3}$ \quad Zhen Lei$^{3}$ \quad Stan Z. Li$^{1}$ \\
      {\small $^{1}$Westlake University, Hangzhou, China} \\
      {\small $^{2}$Shenzhen University of Advanced Technology, Shenzhen, China} \\
      {\small $^{3}$Center for Artificial Intelligence and Robotics, Hong Kong Institute of Science and Innovation,} \\
      {\small Chinese Academy of Sciences, Hong Kong, China} \\
      {\small $^{4}$University Key Laboratory of Information and Communication Security Backup and Recovery,} \\
      {\small Yunnan Minzu University, Kunming, China} \\
      {\small \texttt{zangzelin@westlake.edu.cn}}
    \end{minipage}%
  }
}

\begin{document}
\maketitle

\begin{abstract}
Automated single-cell annotation is difficult when the most abundant genes are not the most discriminative ones, or when a target state is poorly covered by a fixed reference atlas.
GPTCelltype-style one-shot prompting allows large language models (LLMs) to produce plausible labels from generic expression signals, while reference-based annotators can force unfamiliar states into the nearest known category.
We propose \methodname{}, a prompt-driven framework for batch-level single-cell annotation that separates evidence grounding from label decision.
\methodname{} first uses Reverse Verification Query (RVQ) to combine tissue context, observed differentially expressed genes, and LLM-elicited biological priors into structured candidate-specific premises.
Verifier agents then convert these premises into explicit premise-to-claim reasoning trees, and bounded multi-round debate compares, challenges, and revises the resulting claims before consensus or final adjudication.
The returned Syllogistic Derivation Tree (SDT) provides an auditable debate trace rather than a formal proof of the annotation.
In open-candidate benchmarks across five datasets, a locally deployed Qwen3-30B model with \methodname{} achieves 75.5\% average accuracy, compared with 64.2\% for the strongest evaluated CoT baseline and 51.9\% for the strongest evaluated scPilot variant.
In oracle-candidate benchmarks across three species, \methodname{} remains competitive across backbones, and local inference substantially reduces monetary cost for batch annotation.
Code is available at:
\href{https://anonymous.4open.science/r/MATCell-4067}
{https://anonymous.4open.science/r/MATCell-4067}
\end{abstract}

\section{Introduction}
\label{intro}

\begin{figure}[t!]
    \setlength{\textfloatsep}{6pt plus 1pt minus 1pt}
    \centering
    \includegraphics[width=\linewidth]{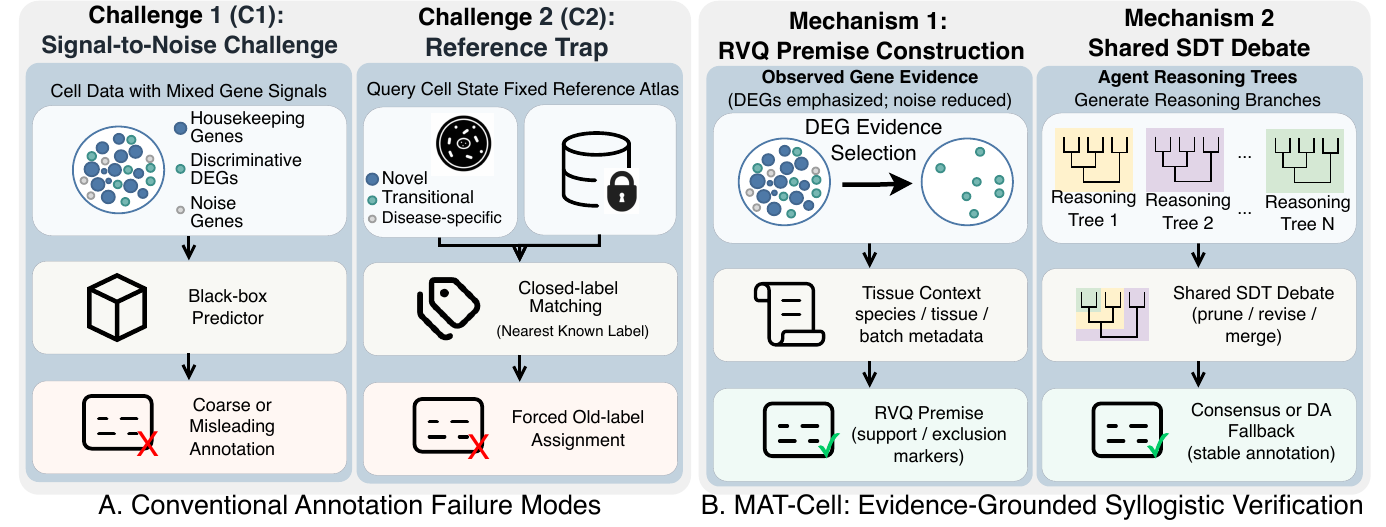}
    \vspace{-0.65em}
    \captionsetup{skip=0pt}
    \caption{\textbf{Cellular Reasoning Challenges and Solutions.} \textbf{(A) Challenges:} GPTCelltype-style annotation can be distracted by abundant but weakly discriminative genes, while reference-based systems can map unfamiliar states to the nearest known label. \textbf{(B) \methodname{}:} RVQ grounds the input in observed DEGs and tissue context. Each verifier agent then builds a premise-to-claim reasoning tree; multi-round debate compares these trees to reach consensus or final adjudication.}
    \label{fig:motivation}
    \vspace{-1.8em}
\end{figure}

Single-cell RNA sequencing (scRNA-seq)~\citep{lahnemann2020eleven, klein2015droplet} now profiles tissues at a scale where manual annotation is no longer practical~\citep{regev2017human, hao2024large}.
In a typical analysis, each cluster or batch must be assigned a cell identity from a short list of expressed genes, tissue context, and often incomplete prior knowledge.
This is difficult because the most visible genes are not always the most informative ones: abundant housekeeping, ribosomal, stress-response, or lineage-shared genes can dominate the input, while the genes that separate nearby subtypes may appear only in a smaller set of differentially expressed markers.
The problem is especially fragile for rare, transitional, disease-specific, or cross-species cell states, where a coarse label may look plausible even when the discriminative evidence points elsewhere~\citep{trapnell2015defining, wagner2016revealing}.

Reference-based annotation methods are reliable when the target population is well represented in an atlas.
Methods such as CellTypist~\citep{dominguez2022cross} and scANVI~\citep{xu2021probabilistic} compare new cells against learned reference manifolds and can scale to common cell types.
However, this design also creates a closed-label pressure: when a cluster lies outside the reference distribution, the method must still align it to the nearest known category.
As a result, transitional progenitors, disease-associated states, or fine-grained tissue-specific subtypes may be assigned confident but biologically coarse labels~\citep{luecken2022benchmarking, stuart2019comprehensive}.
The limitation is not only a lack of data coverage; it is also a lack of an explicit step that asks whether the proposed label is actually supported by the observed marker evidence.

Large language models (LLMs) offer a complementary possibility because they encode broad biological knowledge and can reason over open vocabularies~\citep{xiao2024cellagent, mao2025scagent, fang2025a}.
Yet GPTCelltype-style prompting is brittle for single-cell annotation.
When a prompt lists the top expressed genes, the model often explains the cluster using familiar but non-specific signals, producing a plausible cell type rather than checking whether the label is supported by discriminative DEGs.
Retrieval-Augmented Generation (RAG) can add external marker information, but static marker databases are incomplete for long-tail cell states, disease contexts, and cross-species settings.
Thus, an LLM-based annotator needs two safeguards: it should obtain biological priors beyond a fixed database, and it should be forced to compare those priors against the observed evidence rather than treating them as answers.

We propose \methodname{}, a prompt-driven framework for batch-level single-cell annotation built around three modules: RVQ premise construction, verifier-agent reasoning trees, and consensus/fallback assembly.
First, Reverse Verification Query (RVQ) combines tissue context, observed DEGs, and LLM-elicited biological priors into a structured premise for the current cluster or batch.
Second, each verifier agent turns this premise into a syllogistic reasoning tree: a biological major premise, an observed-evidence minor premise, and a candidate annotation claim.
These agent-level trees are written into a shared Syllogistic Derivation Tree (SDT)~\citep{smith1989prior, khemlani2012theories}, which we use as an auditable debate trace rather than a formal proof system.
Across rounds, agents compare claims, challenge unsupported premises, and revise disputed branches.
The final annotation is produced when the debate reaches a stable consensus, or by a Decision Agent for unresolved cases.
Together, these stages define a task-constrained inference-time protocol that gives cell-type annotation an explicit evidence chain, decision criterion, and fallback mechanism.

Our contributions are threefold.
\textbf{(1) Evidence grounding}: We introduce RVQ, which combines observed DEGs, tissue context, and elicited biological priors into structured premises instead of relying only on static marker retrieval.
\textbf{(2) Agent-level syllogistic reasoning}: We require each verifier agent to express its annotation as a premise-based reasoning tree rather than a single label.
\textbf{(3) Debate-based stabilization}: We aggregate and revise these trees through multi-round debate, producing a shared SDT and a consensus or adjudicated annotation across open-candidate and oracle-candidate benchmarks.

\section{Related Work}
\label{related}
\textbf{Reference-based annotation under distribution shift.}
Traditional automated cell type annotation methods primarily rely on supervised classification or latent space alignment against curated reference atlases.
Methods such as SingleR~\citep{aran2019reference}, CellTypist~\citep{dominguez2022cross}, and scANVI~\citep{xu2021probabilistic} formulate annotation as a statistical correlation problem within a closed manifold, enabling reliable identification of common cell states. However, they fundamentally operate as implicit pattern-matching approaches.
As a result, they suffer from severe \emph{reference bias and annotation distribution shift}: disease-specific subtypes or transitional states outside the reference manifold are often force-aligned to the nearest known cluster with high confidence.
Recent foundation models, including scGPT~\citep{scgpt2024}, Geneformer~\citep{theis2023geneformer}, and scFoundation~\citep{hao2024large}, scale annotation via Transformer architectures, yet encounter inherent \emph{expression sparsity and feature overlap}, where highly expressed housekeeping genes dominate attention and induce biologically plausible but incorrect annotations.
Critically, these models remain black-box probabilistic predictors and lack mechanisms for explicit biological or logical verification.

\textbf{LLM agents and biological reasoning.}
To address these limitations, recent works have explored agentic frameworks and reinforcement learning.
CellAgent~\citep{xiao2024cellagent}, scAgent~\citep{mao2025scagent}, and scPilot~\citep{gao2025scpilot} primarily orchestrate external bioinformatics tools or LLM agents, while Cell-o1~\citep{fang2025a} and CellDuality~\citep{anonymous2026cellduality} apply task-specific reinforcement learning; however, these paradigms lack explicit and generalizable reasoning traces.
In contrast, advances in LLM reasoning—including Chain-of-Thought~\citep{wei2022chain, kojima2022large}, Tree of Thoughts~\citep{yao2023tree}, Self-Consistency~\citep{wang2022self}, and multi-agent debate frameworks~\citep{liang2023encouraging, du2023improving, li2023camel}—demonstrate that structured reasoning and dialectic verification substantially reduce hallucinations.
\methodname{} bridges these paradigms to transcriptomic analysis by reformulating annotation as an \emph{evidence-grounded reasoning process}. By enforcing a structured logical template through the Reverse Verification Query (RVQ) and rule-guided multi-agent debate, \methodname{} enables transparent and generalizable annotation beyond static reference atlases.

\section{Methodology}
\label{method}

\setlength{\textfloatsep}{8pt plus 2pt minus 2pt}
\begin{figure*}[t!]
    \centering
    \centerline{\includegraphics[width=1\textwidth]{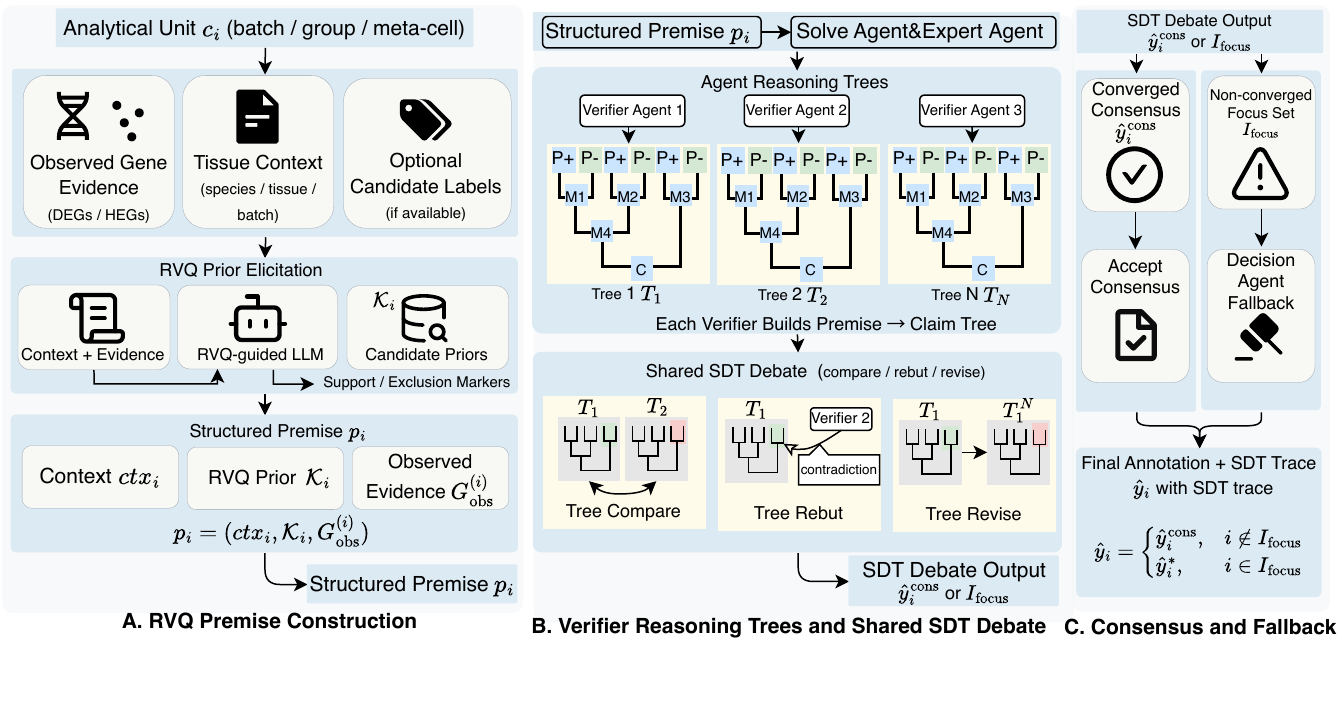}}
    \vspace{-0.85em}
    \captionsetup{skip=0pt}
    \caption{\textbf{The \methodname{} Framework.} For each analytical unit, \methodname{} takes observed gene evidence, tissue context, and optional candidate labels as input. RVQ queries an LLM for candidate priors and combines them with context and observed evidence into a structured premise. Verifier Agents build premise-to-claim reasoning trees, then compare, rebut, and revise them through a shared SDT. The final stage returns either a normalized exact-match consensus annotation or routes unresolved focus-set cases to a Decision Agent fallback, together with the SDT trace.}
    \label{fig:framework}
    \vspace{-1.1em}
\end{figure*}

\methodname{} does not ask an LLM to immediately name a cell type.
Instead, it decomposes annotation into three steps: elicit candidate-specific biological expectations, test these expectations against observed DEGs and context, and let multiple verifier agents build and challenge premise-to-claim reasoning trees in a shared SDT.
This design follows the broader observation that structured reasoning and debate can make LLM decisions more checkable~\citep{wei2022chain, yao2023tree, liang2023encouraging, du2023improving}.
Concretely, RVQ is the premise-construction step, optional inductive anchoring is an implementation-level candidate and marker filtering procedure when raw inputs require it, and the SDT is an auditable debate trace rather than a formal proof system; Appendix~\ref{appendix:method_details} gives the raw-matrix preprocessing details for this optional anchoring step.
For each batch-level analytical unit, the system receives observed genes, tissue or dataset context, and optionally a candidate label set.
It returns a cell-type annotation together with a structured debate trace.
Figure~\ref{fig:framework} gives the overall workflow, and the complete inference procedure is provided in Algorithm~\ref{alg:matcell} in Appendix~\ref{appendix:method_details}.

\subsection{Problem Setup: Batch-Level Annotation with Structured Traces}
Let $\mathbf{X} \in \mathbb{R}^{N \times G}$ denote a single-cell gene expression matrix with $N$ cells and $G$ genes.
\methodname{} operates on analytical units $\mathcal{C}=\{c_1,\dots,c_B\}$, where each $c_i$ may be a batch-level instance, a predefined group, or an unsupervised meta-cell cluster.
The goal is to predict a cell-type label $\hat{y}_i$ for each unit and to return a structured record of the reasoning process.

For each unit $c_i$, we extract observed evidence $\mathbf{e}_i=(\mathcal{G}^{(i)}_{\mathrm{DEG}},\mathcal{G}^{(i)}_{\mathrm{HEG}},\mathbf{ctx}_i)$, where $\mathcal{G}^{(i)}_{\mathrm{DEG}}$ denotes differentially expressed genes, $\mathcal{G}^{(i)}_{\mathrm{HEG}}$ denotes highly expressed genes, and $\mathbf{ctx}_i$ contains tissue, species, batch, or dataset metadata.
This follows the common use of marker-level evidence in cluster-level single-cell annotation systems~\citep{shao2020sccatch}.
In our experiments, these analytical units are the benchmark-provided batch-level instances or groups used by a recent open-candidate LLM-based single-cell annotation benchmark~\citep{gao2025scpilot}.
The meta-cell partition $\Phi:\mathbf{X}\to\mathcal{C}$ is a general extension for raw matrices where units are not pre-defined, rather than an additional clustering contribution evaluated in this work.
In the fixed-candidate setting, reasoning is restricted to a provided candidate set $\mathcal{Y}_{cand}$, following the oracle-candidate setup in Cell-o1~\citep{fang2025a}.
In the open-candidate setting, no candidate labels are given; a local candidate set is induced during inference from observed evidence, tissue context, and elicited biological priors.

\subsection{Reverse Verification Query for Evidence-Grounded Premises}
\label{sec:method_step1}
Reverse Verification Query (RVQ) constructs the premise that all downstream agents must use.
The key idea is to reverse the usual annotation order: instead of asking the model to immediately output a label, RVQ first asks what biological evidence should support each plausible label in the given tissue context.
For analytical unit $c_i$, RVQ conditions on $\mathcal{G}_{obs}^{(i)}=(\mathcal{G}^{(i)}_{\mathrm{DEG}},\mathcal{G}^{(i)}_{\mathrm{HEG}})$, $\mathbf{ctx}_i$, and, when available, $\mathcal{Y}_{cand}$. 

In the fixed-candidate setting, RVQ asks a source LLM to list expected supporting markers for each label in $\mathcal{Y}_{cand}$ under the given tissue context.
In the open-candidate setting, RVQ is not given candidate labels; it proposes plausible labels and associated marker evidence from $\mathcal{G}_{obs}^{(i)}$ and $\mathbf{ctx}_i$.
The RVQ source model can be the same backbone used by the downstream agents or a separate prior source; when relevant, experiments distinguish self-generated RVQ priors from stronger external priors.
The resulting unit-specific prior is
\begin{equation}
\mathcal{K}_i =
\left\{
\left(
\ell,
\mathcal{M}_{\ell}^{+}
\right)
\right\}_{\ell \in \mathcal{Y}^{prior}_i},
\end{equation}
where $\ell$ is a candidate label and $\mathcal{M}_{\ell}^{+}$ denotes expected supporting markers.
Because $\mathcal{K}_i$ is model-generated, \methodname{} treats it as reference hypotheses rather than ground truth.
Downstream agents must therefore verify every RVQ-supported candidate against the observed DEGs/HEGs and tissue context; a label suggested by $\mathcal{K}_i$ can be rejected if its expected markers are not supported by $\mathcal{G}_{obs}^{(i)}$.
The structured premise for unit $c_i$ is $\mathbf{p}_i=(\mathcal{G}_{obs}^{(i)},\mathcal{K}_i,\mathbf{ctx}_i)$.

\subsection{Agent-Level Syllogistic Tree Construction}
\label{sec:method_step2}
The second design choice is to make each agent produce a structured reasoning tree rather than a label alone.
Given $\mathbf{p}_i$, the Solve Agent (SA) first proposes a local candidate set and a seed reasoning node $z_i^{(0)}=SA(\mathbf{p}_i)=(M_i^{(0)},E_i^{(0)},\mathcal{Y}_i^{(0)})$, where $M_i^{(0)}$ summarizes candidate-level biological priors, $E_i^{(0)}$ summarizes observed evidence, and $\mathcal{Y}_i^{(0)}$ is the local candidate set.
We require $\mathcal{Y}_i^{(0)}\subseteq\mathcal{Y}_i^{space}$ and $|\mathcal{Y}_i^{(0)}|\ll|\mathcal{Y}_i^{space}|$, where $\mathcal{Y}_i^{space}$ is $\mathcal{Y}_{cand}$ in the fixed-candidate setting and an open vocabulary induced from $\mathcal{K}_i$, $\mathcal{G}_{obs}^{(i)}$, and $\mathbf{ctx}_i$ otherwise.

The Expert Agent (EA) assigns $R$ verifier roles $\{\pi_{i,1},\dots,\pi_{i,R}\}=EA(\mathbf{ctx}_i,\mathcal{Y}_i^{(0)};R)$ for unit $c_i$.
These roles correspond to complementary checks such as marker support, exclusion markers, lineage consistency, and tissue-context consistency.
At debate round $t$, each Verifier Agent (VA) reads the structured premise, its assigned role, and the current SDT state, then writes
\begin{equation}
z_{i,j}^{(t)}
=
VA_j(\mathbf{p}_i,\pi_{i,j},\mathcal{T}_i^{(t-1)})
=
(M_{i,j}^{(t)},E_{i,j}^{(t)},\hat{y}_{i,j}^{(t)}),
\end{equation}
where $M_{i,j}^{(t)}$ is the biological major premise, $E_{i,j}^{(t)}$ is the observed-evidence minor premise, and $\hat{y}_{i,j}^{(t)}$ is the candidate annotation claim.
Thus, each VA contributes one explicit premise-to-claim tree of the form ``biological expectation + observed evidence $\Rightarrow$ candidate label,'' rather than only a final answer.
In Fig.~\ref{fig:framework}, $P{+}$ and $P{-}$ denote supporting and exclusion premises, $M$ denotes marker evidence, and $C$ denotes the candidate claim.

\subsection{Shared SDT Debate for Consensus and Fallback}
\label{sec:method_step3}
The third design choice is to debate over agent-level trees in a shared SDT, drawing on syllogistic accounts of premise-based reasoning~\citep{smith1989prior, khemlani2012theories}.
After round $t$, the SDT for unit $c_i$ is updated as $\mathcal{T}_i^{(t)}=\mathcal{T}_i^{(t-1)}\cup\{z_{i,j}^{(t)}\}_{j=1}^{R}$.
The SDT is therefore not only a log of messages: it is the shared object on which agents compare support, expose contradictions, and revise candidate claims.
The shared SDT implements three operations shown in Fig.~\ref{fig:framework}: tree comparison, rebuttal of unsupported branches, and revision of disputed claims.
Each SDT branch records the candidate label, supporting marker evidence, exclusion or counterevidence when available, and a short rationale connecting the premise to the claim.
Let $A_i^{(t)}=\{\hat{y}_{i,1}^{(t)},\dots,\hat{y}_{i,R}^{(t)}\}$ be the set of verifier answers at round $t$.
Before testing consensus, labels are normalized by lowercasing, removing punctuation and extra whitespace, and applying the same benchmark-specific synonym or ontology mapping used for evaluation when such a mapping is available.
If the normalized answer set satisfies $|A_i^{(t)}|=1$ for any $t\le T_{\max}$, normalized exact-match consensus is reached and the final annotation is $\hat{y}_i^{cons}=\hat{y}_{i,1}^{(t)}$.
We use this conservative normalized exact-match criterion to avoid treating semantically related but biologically distinct labels as agreement in fine-grained annotation.
For batch processing, units that reach consensus are removed from the active focus set to avoid redundant computation.

Units still unresolved after $T_{\max}$ rounds are collected as $\mathcal{I}_{focus}=\{\,i \mid |A_i^{(T_{\max})}|>1\,\}$.
We refer to either $\hat{y}_i^{cons}$ or membership in $\mathcal{I}_{focus}$ as the SDT debate output passed to the final assembly stage.
For each $i\in\mathcal{I}_{focus}$, the Decision Agent receives the structured premise and final SDT state, producing $\hat{y}_i^{DA}=DA(\mathbf{p}_i,\mathcal{T}_i^{(T_{\max})})$.
The final prediction is
\begin{equation}
\hat{y}_i =
\begin{cases}
\hat{y}_i^{cons}, & i \notin \mathcal{I}_{focus}, \\
\hat{y}_i^{DA}, & i \in \mathcal{I}_{focus}.
\end{cases}
\end{equation}
The returned SDT $\hat{\mathcal{T}}_i$ is therefore an auditable debate trace: it records the initial candidate set, each verifier tree, revisions across rounds, and the final consensus or fallback decision.

\section{Experiments}
\label{exper}
\subsection{Experimental Setup}

We evaluate \methodname{} under two settings that separate candidate discovery from label selection.
In the \textit{open-candidate setting}, no candidate cell-type labels are given; the model must infer a candidate space from tissue context and marker evidence.
In the \textit{oracle-candidate setting}, the benchmark label set is given in advance, so the task is reduced to choosing among fixed candidates.
Open-candidate evaluation uses PBMC3K~\citep{10x_pbmc3k}, Liver~\citep{liang2022temporal}, Retina~\citep{menon2019single}, Brain, and Heart; the first three follow scPilot~\citep{gao2025scpilot}, while Brain and Heart are additional CELLxGENE-derived datasets used to test more heterogeneous tissue structures.
Oracle-candidate evaluation uses CellxGene-derived Human, Mouse, and Monkey benchmarks following Cell-o1~\citep{fang2025a}, with 2{,}400 batch-level instances per species.

Each analytical unit is represented by tissue or dataset context and a gene list.
In the open-candidate setting, \methodname{} uses DEGs as the main discriminative input.
In the oracle-candidate setting, we evaluate three gene-input views: \textit{Both} (HEGs + DEGs), \textit{DEG-only}, and \textit{HEG-only} (top-25 highly expressed genes).
HEGs test sensitivity to abundant but weakly discriminative genes, while DEGs provide the main discriminative evidence used by \methodname{}.
DEG extraction and context construction details are provided in Appendix~\ref{appendix:data_processing}.
Open-candidate results are reported as scPilot-style graded annotation scores~\citep{gao2025scpilot} with scores of 1, 0.5, and 0, macro-averaged over benchmark units; oracle-candidate results use exact string matching following Cell-o1~\citep{fang2025a}.
Baselines cover classical marker/signature methods, GPTCelltype prompting, CoT prompting, and recent LLM-based annotation systems, as listed chronologically in Tables~\ref{tab:oa_results} and~\ref{tab:ca_results}; detailed descriptions are provided in Appendix~\ref{appendix:evaluation_baselines}.
GPTCelltype asks the LLM to output a batch-level label in one shot, while CoT first elicits a short reasoning trace before the final label.
Unless otherwise specified, \methodname{} uses $R=5$ Verifier Agents (VAs), a maximum debate depth of $T_{\max}=3$, and temperature 0.7.
Unless otherwise specified, RVQ uses Gemini 3 as the default prior-source model.
For stochastic LLM-based methods, we report mean and standard deviation over independent runs.

\subsection{Main Results: Open-Candidate Annotation}

\begin{table*}[t]
\centering
\caption{\textbf{Open-candidate annotation results.}
Accuracy is the scPilot-style graded annotation score, reported as mean $\pm$ std (\%) across five benchmarks without oracle candidates.
Best and second-best values are shown in \textbf{bold} and \underline{underline}.
Red values indicate the relative change in average accuracy against the corresponding strongest non-MAT-Cell baseline; for unmatched backbones, the strongest non-MAT-Cell average is used.}
\resizebox{0.99\textwidth}{!}{
\label{tab:oa_results}
\newcommand{\pmstd}[1]{{\scriptsize$\pm$#1}}
\newcommand{\matgain}[1]{\textcolor{red}{{\scriptsize($\uparrow$ +#1\%)}}}
\newcommand{\matdrop}[1]{\textcolor{red}{{\scriptsize($\downarrow$ #1\%)}}}
\renewcommand{\arraystretch}{0.9}
\begin{tabular}{@{}l l | c c c c c | c@{}}
\toprule
\textbf{Method} & \textbf{Model} & \textbf{PBMC3K} & \textbf{Liver} & \textbf{Retina} & \textbf{Brain} & \textbf{Heart} & \textbf{Avg Acc.} \\
\midrule

CellTypist (Sci'22) & Non-LLM
& 46.4 & 56.3 & 38.8 & 24.2 & 69.0 & 46.9 \\
CellMarker2.0 (NAR'23) & Non-LLM
& 30.4 & 25.0 & 63.2 & 62.5 & 26.7 & 41.6 \\

\midrule
\multirow{5}{*}{\makecell[l]{CoT\\(NeurIPS'22)}} & Gemini 2.5 Pro
& 62.5\pmstd{4.4} & 78.2\pmstd{0.8} & 49.5\pmstd{1.2} & 66.9\pmstd{1.8} & 64.0\pmstd{2.8} & 64.2 \\
 & GPT-o1
& 67.5\pmstd{5.2} & 70.4\pmstd{1.3} & 49.0\pmstd{1.4} & 66.9\pmstd{4.2} & 61.3\pmstd{1.8} & 63.0 \\
 & Qwen3-14B
& 57.8\pmstd{3.1} & 64.8\pmstd{8.5} & 66.3\pmstd{7.3} & 64.0\pmstd{6.0} & 51.1\pmstd{10.5} & 60.8 \\
 & GPT-4o
& 68.8\pmstd{0.0} & \underline{80.4\pmstd{2.5}} & 44.2\pmstd{2.2} & 45.5\pmstd{2.0} & 58.0\pmstd{3.8} & 59.4 \\
 & Qwen3-30B
& 57.5\pmstd{2.5} & 78.1\pmstd{4.4} & 55.0\pmstd{17.8} & 57.7\pmstd{2.5} & 45.8\pmstd{4.3} & 58.8 \\

\midrule
\multirow{5}{*}{\makecell[l]{GPTCelltype\\(Nat Methods'24)}} & Qwen3-30B
& 67.5\pmstd{6.1} & 44.4\pmstd{6.2} & 74.7\pmstd{2.1} & 23.1\pmstd{9.7} & 24.6\pmstd{12.3} & 46.9 \\
 & GPT-o1
& 66.7\pmstd{0.5} & 56.0\pmstd{0.1} & 47.4\pmstd{0.2} & 29.6\pmstd{2.6} & 33.8\pmstd{9.2} & 46.7 \\
 & Qwen3-14B
& 67.4\pmstd{1.8} & 46.7\pmstd{9.0} & 74.7\pmstd{7.7} & 11.9\pmstd{9.8} & 12.3\pmstd{3.8} & 42.6 \\
 & GPT-4o
& 60.4\pmstd{0.5} & 44.0\pmstd{0.2} & 43.9\pmstd{0.2} & 35.6\pmstd{1.8} & 23.1\pmstd{16.9} & 41.4 \\
 & Gemini 2.5 Pro
& 58.3\pmstd{0.1} & 49.4\pmstd{0.7} & 49.1\pmstd{0.0} & 21.5\pmstd{8.2} & 18.5\pmstd{6.2} & 39.4 \\

\midrule
\multirow{5}{*}{\makecell[l]{scPilot\\(NeurIPS'25)}} & GPT-4o
& 64.6\pmstd{1.7} & 51.2\pmstd{0.2} & 67.5\pmstd{1.1} & 45.2\pmstd{22.6} & 30.8\pmstd{6.9} & 51.9 \\
 & GPT-o1
& \underline{79.2\pmstd{0.5}} & 51.8\pmstd{0.1} & 72.8\pmstd{0.7} & 11.5\pmstd{23.0} & 35.4\pmstd{15.8} & 50.1 \\
 & Qwen3-30B
& 75.0\pmstd{13.7} & 43.7\pmstd{5.4} & 73.7\pmstd{5.8} & 21.5\pmstd{10.2} & 20.0\pmstd{6.2} & 46.8 \\
 & Gemini 2.5 Pro
& 70.8\pmstd{2.1} & 48.8\pmstd{0.1} & \textbf{76.3\pmstd{0.0}} & 14.8\pmstd{5.2} & 16.9\pmstd{11.3} & 45.5 \\
 & Qwen3-14B
& 72.5\pmstd{5.0} & 42.2\pmstd{5.5} & \underline{74.7\pmstd{2.1}} & 2.2\pmstd{1.8} & 12.3\pmstd{3.8} & 40.8 \\

\midrule
\multirow{5}{*}{\makecell[l]{MAT-Cell\\(ours)}} & Qwen3-30B-c
& 61.3\pmstd{2.8} & 79.6\pmstd{1.9} & 51.0\pmstd{2.5} & 44.4\pmstd{1.4} & 60.0\pmstd{0.0} & 59.3 \matgain{0.9} \\
 & DeepSeek-V3
& 63.8\pmstd{2.8} & 68.5\pmstd{1.9} & 52.1\pmstd{2.9} & \underline{76.3\pmstd{6.5}} & 62.0\pmstd{1.8} & 64.5 \matgain{0.5} \\
 & Gemini 3
& 75.0\pmstd{0.0} & 80.4\pmstd{2.1} & 50.5\pmstd{3.5} & 65.6\pmstd{3.9} & \underline{78.0\pmstd{1.9}} & 69.9 \matgain{8.9} \\
 & GPT-4o
& 63.8\pmstd{5.2} & 75.5\pmstd{3.9} & 65.5\pmstd{7.4} & \textbf{79.4\pmstd{1.7}} & 73.3\pmstd{0.0} & \underline{71.5} \matgain{20.4} \\
 & Qwen3-30B
& \textbf{80.0\pmstd{2.8}} & \textbf{81.1\pmstd{4.8}} & 63.2\pmstd{1.9} & 71.9\pmstd{6.2} & \textbf{81.3\pmstd{3.8}} & \textbf{75.5} \matgain{28.4} \\

\bottomrule
\end{tabular}
}
\vspace{-1.2em}
\end{table*}

\textbf{Open-candidate gains reflect verified priors.}
Open-candidate annotation is difficult because the model must decide what labels are even plausible before it can choose one.
\methodname{} with Qwen3-30B reaches the highest average graded accuracy among the evaluated open-candidate methods in Table~\ref{tab:oa_results} at 75.5\%, compared with 64.2\% for the strongest CoT baseline and 51.9\% for the strongest scPilot variant.
This comparison emphasizes whether structured verification can improve a weaker, locally deployable backbone enough to compete with stronger API-based prompting baselines, rather than further increasing sampling budget on the strongest models.
Under the same Qwen3-30B backbone, the self-prior variant reaches 59.3\%, modestly above the Qwen3-30B CoT baseline at 58.8\%.
We treat prior-source quality as an explicit factor in the open-candidate setting: the Qwen3-30B self-prior variant tests the fully local workflow, while the Gemini-prior variant tests verification under higher candidate recall.
Thus, the gains reflect the interaction between candidate recall and evidence verification, rather than direct adoption of a stronger prior's output.
It also suggests that open-candidate performance is limited first by candidate recall and then by evidence verification: RVQ determines whether the correct label enters the local search space, while SDT debate determines whether that label survives marker-level checking.

\textbf{Fine-grained tissues reveal both gains and limits.}
The clearest gains appear in tissues where the correct label depends on context or fine-grained markers.
PBMC3K is comparatively easier, since common immune-cell labels are familiar to most methods.
Liver, Brain, and Heart expose the failure mode that motivates MAT-Cell: GPTCelltype-style prompting can settle on a nearby coarse label, while classical reference- or marker-based methods are constrained by the labels already represented in their atlas or marker database.
These cases are where the SDT is most useful: competing branches can challenge coarse or weakly supported labels with marker-level counterevidence before fallback adjudication.
The results also show a limit of the framework.
MAT-Cell is not best in every column, and its performance still depends on the backbone and on the quality of the biological priors.
The value of RVQ and SDT debate is therefore not that they make the model independent of evidence, but that they reduce reliance on a single uncontrolled generation step.

\vspace{-1.1em}
\subsection{Controlled Results: Oracle-Candidate Label Selection}

\textbf{Oracle candidates isolate label selection.}
\begin{table*}[t]
\centering
\caption{\textbf{Oracle-candidate cross-species results.}
Accuracy is mean $\pm$ std (\%) across Human, Mouse, and Monkey under \textit{Both}, \textit{DEG-only}, and \textit{HEG-only} views with oracle candidate labels.
Best and second-best values are shown in \textbf{bold} and \underline{underline}.
Red values report relative gains in Avg Acc. over the same-backbone CoT baseline; ``-c'' denotes self-generated RVQ priors.}
\label{tab:ca_results}
\setlength{\tabcolsep}{2pt}
\renewcommand{\arraystretch}{1.15}
\resizebox{0.98\textwidth}{!}{%
\newcommand{\pmstd}[1]{{\scriptsize$\pm$#1}}
\newcommand{\matgain}[1]{\textcolor{red}{{\scriptsize($\uparrow$ +#1\%)}}}
\begin{tabular}{l l | c c c c c c c c c | c}
\toprule
\multirow{2}{*}{\textbf{Method}} & \multirow{2}{*}{\textbf{Model}} &
\multicolumn{3}{c}{\textbf{Human}} &
\multicolumn{3}{c}{\textbf{Mouse}} &
\multicolumn{3}{c}{\textbf{Monkey}} & \multirow{2}{*}{\textbf{Avg Acc.}} \\
\cmidrule(lr){3-5} \cmidrule(lr){6-8} \cmidrule(lr){9-11}
& & Both & DEG & HEG & Both & DEG & HEG & Both & DEG & HEG & \\
\midrule
scCATCH (iScience'20) & Non-LLM
& 33.9\pmstd{1.0} & 30.2\pmstd{1.2} & 28.6\pmstd{1.4}
& 37.4\pmstd{0.9} & 36.7\pmstd{0.5} & 30.3\pmstd{1.0}
& 41.8\pmstd{0.7} & 42.2\pmstd{0.8} & 20.0\pmstd{0.8}  & 33.5 \\
Cell-ID (Nat Biotech'21) & Non-LLM
& 43.1\pmstd{0.4} & 43.2\pmstd{0.3} & 21.8\pmstd{1.1}
& 28.8\pmstd{1.3} & 28.9\pmstd{1.1} & 16.0\pmstd{0.7}
& 51.7\pmstd{0.5} & 49.8\pmstd{1.1} & 30.3\pmstd{0.7}  & 34.8 \\
Cell-o1 (arXiv'25) & Qwen2.5-7B 
& 42.9\pmstd{1.2} & 40.9\pmstd{1.5} & 24.3\pmstd{1.1}
& 39.0\pmstd{1.6} & 39.4\pmstd{1.7} & 23.2\pmstd{0.8}
& 69.5\pmstd{1.8} & 68.5\pmstd{1.9} & 50.3\pmstd{1.2}  & 44.2 \\

\midrule
\multirow{6}{*}{\makecell[l]{CoT\\(NeurIPS'22)}}
& Qwen3-14B
& 48.7\pmstd{0.5} & 61.8\pmstd{1.7} & 21.6\pmstd{0.4}
& 47.9\pmstd{0.8} & 53.4\pmstd{1.5} & 27.4\pmstd{0.9}
& 67.2\pmstd{0.8} & 72.0\pmstd{0.6} & 40.4\pmstd{1.0}  & 48.9 \\
& Qwen3-30B
& 55.5\pmstd{0.3} & 62.7\pmstd{0.5} & 21.3\pmstd{0.4}
& 47.6\pmstd{2.2} & 52.1\pmstd{0.7} & 25.6\pmstd{0.4}
& 64.9\pmstd{1.1} & 71.6\pmstd{1.1} & 35.7\pmstd{0.2}  & 48.6 \\
& Llama-3.1-70B
& 29.0\pmstd{1.5} & 48.6\pmstd{0.9} & 11.7\pmstd{0.5}
& 32.1\pmstd{0.7} & 42.8\pmstd{1.2} & 14.6\pmstd{0.3}
& 61.9\pmstd{1.2} & 65.1\pmstd{0.9} & 23.5\pmstd{0.9}  & 36.6 \\
& DeepSeek-V3
& 39.5\pmstd{21.5} & 74.5\pmstd{1.4} & 23.4\pmstd{0.4}
& 65.0\pmstd{0.4} & 62.6\pmstd{6.3} & 37.3\pmstd{0.4}
& 85.5\pmstd{0.3} & 85.5\pmstd{0.1} & 53.7\pmstd{0.5}  & 58.6 \\
& Gemini 2.5 Flash
& 64.9\pmstd{0.3} & 71.9\pmstd{0.8} & 23.8\pmstd{0.5}
& 59.4\pmstd{0.3} & 62.9\pmstd{0.1} & 34.1\pmstd{0.1}
& 83.4\pmstd{0.3} & 84.5\pmstd{0.5} & 49.8\pmstd{0.8}  & 59.4 \\
& GPT-4.1
& 67.2\pmstd{1.0} & 74.5\pmstd{0.5} & 17.6\pmstd{0.6}
& 66.3\pmstd{0.4} & 68.1\pmstd{0.8} & 33.2\pmstd{0.5}
& \underline{86.7\pmstd{1.0}} & \underline{86.8\pmstd{0.3}} & 52.0\pmstd{0.5}  & 61.4 \\

\midrule
\multirow{6}{*}{\makecell[l]{GPTCelltype\\(Nat Methods'24)}}
& Qwen3-14B
& 29.1\pmstd{0.8} & 40.9\pmstd{0.9} & 16.4\pmstd{0.4}
& 30.8\pmstd{1.0} & 34.8\pmstd{1.0} & 16.2\pmstd{0.4}
& 54.2\pmstd{0.5} & 58.0\pmstd{0.6} & 30.2\pmstd{0.6}  & 34.5 \\
& Qwen3-30B
& 34.0\pmstd{0.8} & 45.0\pmstd{1.3} & 16.5\pmstd{0.8}
& 34.7\pmstd{1.0} & 38.7\pmstd{0.8} & 17.8\pmstd{0.5}
& 58.8\pmstd{1.0} & 64.4\pmstd{0.9} & 33.1\pmstd{1.0}  & 38.1 \\
& Llama-3.1-70B
& 19.1\pmstd{0.6} & 28.3\pmstd{0.9} & 11.7\pmstd{0.3}
& 20.0\pmstd{0.8} & 29.0\pmstd{1.7} & 12.6\pmstd{0.7}
& 37.6\pmstd{0.5} & 47.1\pmstd{0.8} & 24.3\pmstd{0.7}  & 25.5 \\
& DeepSeek-V3
& 55.1\pmstd{0.7} & 63.2\pmstd{1.2} & 18.9\pmstd{0.5}
& 54.1\pmstd{1.5} & 56.7\pmstd{1.0} & 25.7\pmstd{0.5}
& 37.6\pmstd{0.5} & 47.1\pmstd{0.8} & 24.3\pmstd{0.7}  & 42.5 \\
& Gemini 2.5 Flash
& 64.1\pmstd{0.8} & 70.9\pmstd{0.9} & 18.6\pmstd{0.6}
& 63.4\pmstd{1.2} & 65.9\pmstd{1.5} & 30.6\pmstd{1.1}
& 85.9\pmstd{0.9} & 85.9\pmstd{0.9} & 52.7\pmstd{0.9}  & 59.8 \\
& GPT-4.1
& 66.3\pmstd{1.5} & 73.3\pmstd{0.7} & 20.4\pmstd{0.8}
& 65.5\pmstd{0.8} & 64.9\pmstd{1.3} & 33.5\pmstd{0.5}
& 85.3\pmstd{0.0} & 86.4\pmstd{0.5} & 55.6\pmstd{1.5}  & 61.2 \\

\midrule
\multirow{4}{*}{\makecell[l]{MAT-Cell\\(ours)}}
& Qwen3-30B-c
& 51.8\pmstd{0.6} & 62.6\pmstd{1.4} & 20.3\pmstd{1.2}
& 48.7\pmstd{0.3} & 55.9\pmstd{1.4} & 24.5\pmstd{0.5}
& 73.3\pmstd{0.7} & 74.5\pmstd{0.9} & 41.4\pmstd{0.0}  & 50.3 \matgain{3.5} \\
& Qwen3-14B
& 63.3\pmstd{0.7} & 75.2\pmstd{0.2} & 26.1\pmstd{0.1}
& \underline{80.0\pmstd{0.2}} & \textbf{82.6\pmstd{0.7}} & \textbf{50.9\pmstd{0.6}}
& 62.4\pmstd{0.4} & 68.7\pmstd{0.5} & 36.3\pmstd{0.9}  & 60.6 \matgain{23.9} \\
& Qwen3-30B
& \underline{66.4\pmstd{0.5}} & \underline{76.4\pmstd{1.1}} & \underline{28.2\pmstd{0.9}}
& \textbf{80.8\pmstd{0.3}} & \underline{82.5\pmstd{0.7}} & \underline{49.9\pmstd{0.7}}
& 70.0\pmstd{0.7} & 75.9\pmstd{0.6} & 39.5\pmstd{0.9}  & \underline{63.3} \matgain{30.2} \\
& Gemini 2.5 Flash
& \textbf{79.6\pmstd{0.9}} & \textbf{81.4\pmstd{0.3}} & \textbf{29.5\pmstd{1.0}}
& 74.6\pmstd{0.6} & 76.7\pmstd{0.8} & 40.5\pmstd{0.7}
& \textbf{88.6\pmstd{0.3}} & \textbf{87.7\pmstd{0.4}} & \textbf{57.5\pmstd{2.0}}  & \textbf{68.5} \matgain{15.3} \\
\bottomrule
\end{tabular}
}
\vspace{-0.8em}
\end{table*}

Because the candidate label set is fixed, this setting tests whether each method can use observed genes and tissue context to choose among known labels.
\methodname{} variants occupy the best or second-best positions on all Human columns, are particularly strong on Mouse with Qwen backbones, and remain competitive on Monkey, where GPT-4.1 and Gemini 2.5 Flash baselines are already strong.
Fixing candidates removes one source of open-vocabulary error, but it does not make annotation trivial: methods still need to support or reject each candidate using species- and tissue-specific evidence.
The remaining gains therefore reflect the verification stage more directly than in the open-candidate setting, while the smaller gaps on Monkey indicate a boundary condition: when the candidate set is fixed and the backbone already has strong species-specific knowledge, structured debate has less room to improve the final choice.

\textbf{DEGs shift the bottleneck to evidence quality.}
This pattern appears for GPTCelltype, CoT, and MAT-Cell, so it is a property of the evidence rather than a special advantage of our framework.
Adding HEGs to DEGs does not consistently help; in several cases, the \textit{Both} view is lower than DEG-only, suggesting that abundant genes can dilute discriminative evidence.
MAT-Cell improves label selection when discriminative DEGs are available, but it does not remove dependence on species coverage or backbone knowledge.
Monkey results are closer across methods, especially for GPT-4.1, Gemini 2.5 Flash, and MAT-Cell-Gemini 2.5 Flash.
Thus, once candidate labels are fixed, the main bottleneck shifts from candidate discovery to the quality of evidence used to support or reject each label.
Notably, Gemini 2.5 Flash is strongest on Human and Monkey but relatively weaker on Mouse, suggesting that performance still depends on the backbone's implicit biomedical world knowledge even when candidate labels are fixed.
This pattern may reflect greater exposure of general-purpose LLMs to human atlas and clinical knowledge, whereas mouse cell-type nomenclature and marker-gene systems are more species-specific~\citep{rood2025human,yao2023high}.

\subsection{Ablation and Sensitivity Analysis}

We next isolate three sources of performance: discriminative evidence, multi-agent verification, and the scale of the debate.

\begin{wraptable}{r}{0.52\textwidth}
    \centering
    \vspace{-0.35cm}
    \renewcommand{\arraystretch}{0.9}
    \resizebox{\linewidth}{!}{
    \begin{tabular}{c l | c c c | c c}
    \toprule
    \textbf{ID} & \textbf{Config} & \textbf{H} & \textbf{M} & \textbf{Mk} & \textbf{Gap}~$\downarrow$ & \textbf{RI}~$\uparrow$ \\
    \midrule
    M0 & \makecell[l]{GPTCelltype+HEG} & 16.5\% & 17.8\% & 33.1\% & 16.6\% & 73\% \\
    M1 & \makecell[l]{GPTCelltype+DEG} & 45.0\% & 38.7\% & 64.4\% & 25.7\% & 78\% \\
    M2 & \makecell[l]{CoT+HEG} & 21.3\% & 25.6\% & 35.7\% & 14.4\% & 75\% \\
    M3 & \makecell[l]{CoT+DEG} & 62.7\% & 52.1\% & 71.6\% & 19.5\% & 85\% \\
    M4 & \makecell[l]{MAT-Cell+HEG} & 28.2\% & 49.9\% & 39.5\% & 21.7\% & 80\% \\
    \midrule
    M5 & \makecell[l]{\textbf{MAT-Cell+DEG}} & \textbf{76.4\%} & \textbf{82.5\%} & \textbf{75.9\%} & \textbf{6.6\%} & \textbf{97\%} \\
    \bottomrule
    \end{tabular}
    }
    \captionsetup{font=footnotesize}
    \caption{\textbf{System-level ablation.} All variants use Qwen3-30B in the oracle-candidate setting. H/M/Mk denote Human, Mouse, and Monkey DEG-only accuracy; Gap is the max-min species gap, and RI is the relative invariance score defined in Appendix~\ref{appendix:ablation_extended}.}
    \label{tab:system_ablation}
    \vspace{-0.35cm}
\end{wraptable}

\textbf{Input Evidence Quality and Framework Effect.}
The ablation confirms that input quality is the first bottleneck.
HEG-only configurations remain weak across GPTCelltype, CoT, and MAT-Cell, while DEG-based inputs give a large jump.
This supports the motivation for RVQ grounding: the method should reason from discriminative evidence, not from the most abundant transcripts.
Beyond input quality, MAT-Cell improves over the GPTCelltype+DEG baseline and reduces the cross-species gap, suggesting that the debate stage contributes stability rather than simply adding another prompt.
The interaction is important: DEGs provide discriminative premises, and the SDT turns those premises into claims that can be challenged, revised, or rejected.
The weak HEG-only variants show the converse boundary: the framework cannot create discriminative evidence when the input genes mostly reflect abundance rather than cell-type specificity.

\textbf{Contribution of the Multi-Agent Reasoning Framework.}
The comparison between CoT+DEG and MAT-Cell+DEG isolates the value of multi-agent verification.
Single-agent reasoning can use DEGs, but it has no structured way to force competing explanations to confront counterevidence.
MAT-Cell adds that pressure through Verifier Agents and a shared SDT.

\begin{wrapfigure}{r}{0.43\textwidth}
    \centering
    \vspace{-0.3cm}
    \captionsetup{font=footnotesize}
    \includegraphics[width=\linewidth]{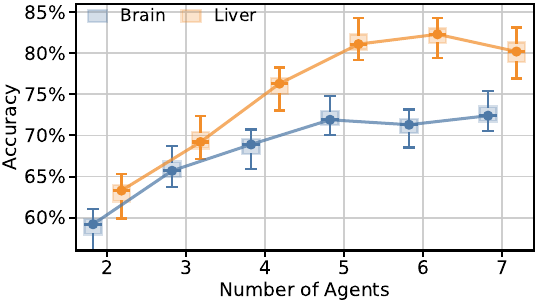}
    \caption{\textbf{Sensitivity to Council Scale ($R$).} Accuracy peaks at $R=5$ and then declines under excessive over-pruning.}
    \label{fig:agent_ablation}
    \vspace{0.4em}
    \renewcommand{\arraystretch}{1.05}
    \captionsetup{font=footnotesize}
    \captionof{table}{\textbf{Sensitivity to Dialectic Depth ($T_{\max}$).} Accuracy (\%) on Liver and Brain across rounds 2--4.}
    \label{tab:tmax_sensitivity}
    \resizebox{\linewidth}{!}{%
    \begin{tabular}{l | c c c}
    \toprule
    \textbf{Dataset} & $T_{\max}=2$ & $T_{\max}=3$ & $T_{\max}=4$ \\
    \midrule
    Liver & 70.4\% & 79.1\% & 66.7\% \\
    Brain & 54.6\% & 54.7\% & 56.0\% \\
    \bottomrule
    \end{tabular}}
    \vspace{-1cm}
\end{wrapfigure}

\textbf{Sensitivity to Council Scale and Dialectic Depth.}
More agents are useful only up to a point.
Accuracy improves as the council grows and peaks around $R=5$, after which excessive rebuttal can over-prune plausible branches.
Debate depth shows a similar trade-off: too few rounds leave weak claims insufficiently checked, while too many rounds add opportunities for drift or over-correction.
We therefore use $R=5$ and $T_{\max}=3$ as a bounded verification setting, rather than assuming that larger councils or longer debates are always better.

\vspace{-0.7em}
\subsection{Representative Case and Error Analysis}

\textbf{SDT traces make individual decisions auditable.}
We use the SDT as an audit trace for individual predictions rather than as an additional aggregate metric.
Fig.~\ref{fig:sdt_case_error} illustrates a representative workflow, where green labels denote correct predictions and red labels denote incorrect predictions.
In Round~1, the Solve Agent first proposes candidate labels for each analytical unit.
Verifier Agents then conduct dialectic verification according to the templates produced by the Expert Agent, exposing uncertain cases such as \textit{NK cell} versus \textit{T cell}, \textit{beta-IC} versus \textit{loop thin}, and \textit{vasa recta descending limb cell} versus \textit{glomerular endothelial cell}.
Across later rounds, the SDT records rebuttals, revisions, and the biological evidence behind each update.
Agents re-evaluate candidate labels using marker combinations and tissue context, for example using endothelial markers to distinguish glomerular endothelial cells from vasa recta subtypes, or using TCR and cytotoxic-effector signals to resolve the NK/T boundary.
When consensus is still not reached, the Decision Agent receives the full SDT trace and performs fallback adjudication.
In the illustrated case, the retained evidence chain supports \textit{vasa recta descending limb cell} as the final decision.

\textbf{Unresolved cases reflect evidence ambiguity.}
We further categorize non-consensus cases that trigger fallback adjudication to diagnose where the workflow remains uncertain.
As shown in Fig.~\ref{fig:sdt_case_error}, 64\% of unresolved cases arise from residual evidence ambiguity after RVQ grounding, 22\% from persistent agent disagreement before final adjudication, and 14\% from weak or conflicting DEG evidence.
This analysis shows that \methodname{} reduces the instability of one-shot generative annotation by organizing candidate generation, evidence verification, rebuttal, and adjudication into an auditable workflow, while its unresolved cases are still constrained by marker overlap among fine-grained cell types and by input DEG quality.

\begin{figure*}[t]
    \centering
    \includegraphics[width=0.98\textwidth]{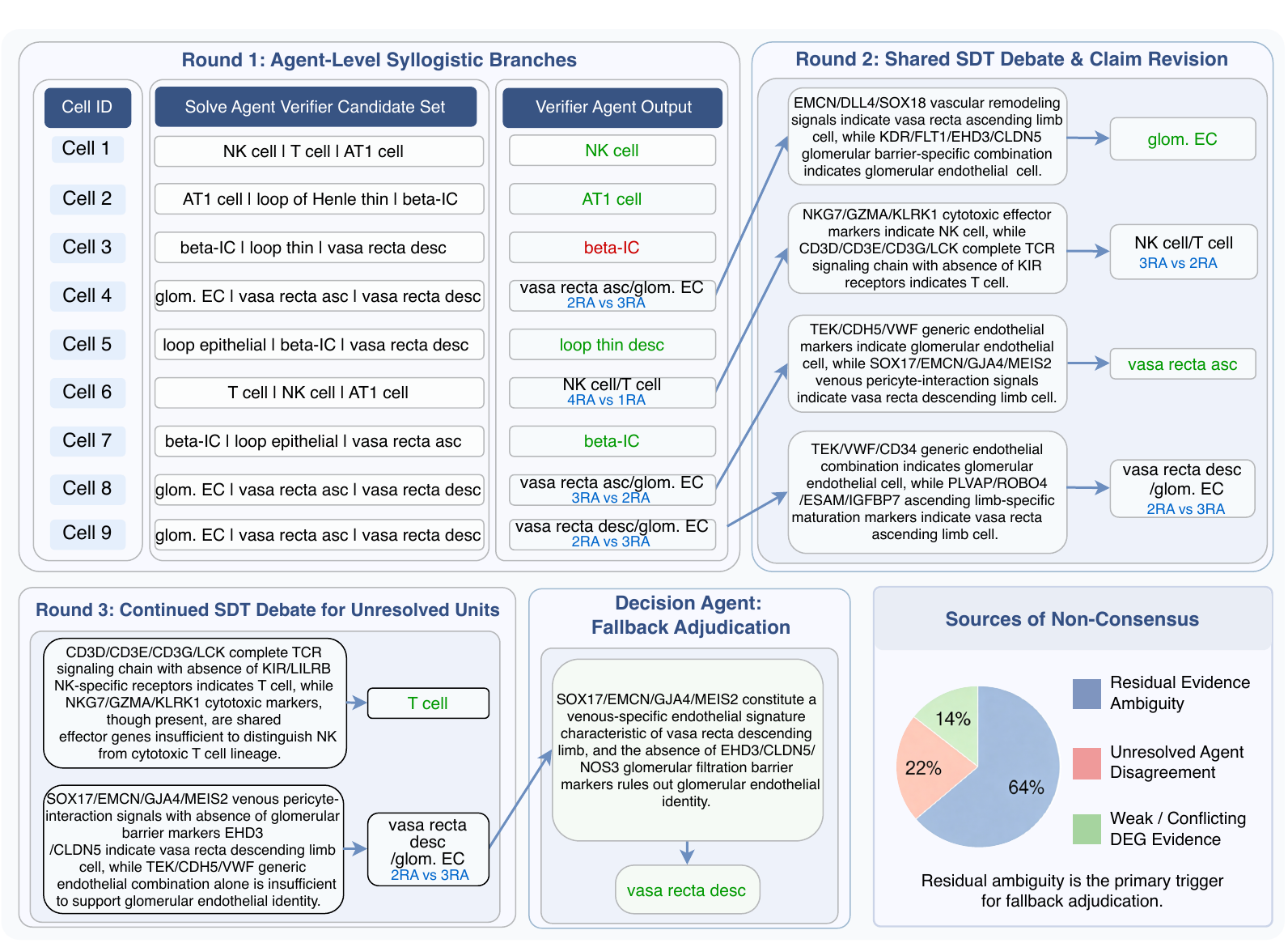} 
    \vspace{-0.75em}
    \caption{\textbf{Representative SDT workflow and non-consensus sources.}
    Green labels indicate correct predictions and red labels indicate incorrect predictions.
    Solve Verifier Agent candidates are examined by Expert Verifier Agent templates and Verifier Agent debate in the shared SDT; unresolved units are sent to the Decision Agent for fallback adjudication.
    The pie chart reports the main non-consensus sources that trigger fallback decisions.}
    \label{fig:sdt_case_error}
    \vspace{-0.55em}
\end{figure*}

\subsection{Performance, Cost, and Efficiency Analysis}
\vspace{-1.0em}
\leavevmode
\begin{wraptable}{r}{0.52\textwidth}
    \centering
    \captionsetup{font=footnotesize}
    \caption{\textbf{Accuracy, time, and monetary cost under API and local inference.}
    Accuracy is the Mouse, DEG-only column in Table~\ref{tab:ca_results}; time/cost is measured on the Human DEG efficiency setting.
    Local cost is amortized over 20-sample A100 batches.}
    \label{tab:efficiency}
    \resizebox{\linewidth}{!}{%
    \setlength{\tabcolsep}{3pt}
    \begin{tabular}{@{}l | cccc@{}}
    \toprule
    \textbf{Metric} & \textbf{CoT} & \textbf{CoT} & \textbf{GPTCelltype} & \textbf{MAT-Cell} \\
    \midrule
    Model & GPT-4.1 & Gemini 2.5 Flash & GPT-4.1 & Qwen3-30B \\
    Deployment & API & API & API & Local A100 \\
    Accuracy & 68.1\% & 62.9\% & 64.9\% & \textbf{82.5\%} \\
    Batch size & 1 & 1 & 1 & 20 \\
    Time/query & $\sim$8.0s & $\sim$6.2s & 1.93s & 9.17s \\
    Cost/query & $\sim$0.412 \textyen & $\sim$0.356 \textyen & $\sim$0.1100 \textyen & \textbf{$\sim$0.0079 \textyen} \\
    \bottomrule
    \end{tabular}
    }%
    \vspace{0.15em}
    \vspace{-0.35cm}
\end{wraptable}

\noindent Table~\ref{tab:efficiency} separates label-selection accuracy from deployment cost by pairing Mouse DEG-only accuracy with measured time and cost from the Human DEG efficiency setting.
On Mouse DEG-only annotation, \methodname{} reaches 82.5\%, compared with 68.1\% for GPT-4.1 CoT and 64.9\% for GPTCelltype.
MAT-Cell is slower than a single API call, but cell batches are independent and can be processed in parallel.
On one A100, a 20-sample batch takes 183.3 seconds, which gives an amortized latency of 9.17 seconds and a cost of about 0.0079 RMB per query.
The relevant trade-off is therefore not latency-free inference, but higher accuracy in this comparison with lower monetary cost when annotation is run as a local batch workload.
Thus, the efficiency claim is deployment-conditional: \methodname{} is most appropriate for offline or batched annotation workflows where throughput and cost dominate single-query latency.

\vspace{-1.0em}
\section{Conclusion}
\label{Conclusion}

\methodname{} introduces a prompt-driven structured reasoning framework that casts batch-level single-cell annotation as auditable, evidence-grounded decision making. RVQ constructs candidate-specific biological premises from observed genes and tissue context, while verifier agents organize their claims into Syllogistic Derivation Trees (SDTs) that expose premises, counterarguments, revisions, and final adjudication. Across open-candidate and oracle-candidate benchmarks, the results suggest that structured premise construction and bounded multi-agent debate can improve annotation reliability, while local inference offers a practical path to lower-cost batch annotation.

\bibliography{bib/example_paper}
\bibliographystyle{plainnat}



\appendix
\onecolumn

\setcounter{section}{0}
\renewcommand{\thesection}{\Alph{section}}

\begin{center}
\Large\bfseries Appendix Contents
\end{center}
\vspace{1em}

\makeatletter
\newif\if@inappendix
\@inappendixfalse

\let\oldcontentsline\contentsline
\renewcommand{\contentsline}[4]{%
  \if@inappendix
    \oldcontentsline{#1}{#2}{#3}{#4}%
  \fi
}
\makeatother

\addtocontents{toc}{\protect\makeatletter\protect\@inappendixtrue\protect\makeatother}

\makeatletter
\renewcommand\tableofcontents{%
    \@starttoc{toc}%
}
\makeatother

\tableofcontents
\newpage

\section{Theoretical Analysis and Proofs}
\label{app:theory}

In this section, we provide mathematical analyses for three theoretical properties of \methodname{}. We adopt a probabilistic framework to analyze the error bounds of the Dialectic Verification mechanism, the bounded-termination property of the Syllogistic Derivation Tree (SDT) construction procedure, and the asymptotic identifiability of novel cell states via Inductive Anchoring.

\subsection{Proof of Error Bound for Dialectic Verification (Theorem 1)}
\label{app:theory_consensus}

We model the Dialectic Verification process as a consensus problem among a committee of noisy binary classifiers.

\textbf{Setup.} Let $v$ be a proposed Logical Tuple with ground truth validity $y \in \{0, 1\}$. The Council of Verifiers consists of $R$ agents $\{f_{\text{ver}}^{(j)}\}_{j=1}^R$. Let $X_j = \mathbb{I}(f_{\text{ver}}^{(j)}(v) = \text{valid})$ be the binary indicator variable for the $j$-th agent's approval.

\begin{assumption}[Bounded Independent Error]
\label{assum:bounded_error}
We assume that the verifier agents are independent conditionally on the input tuple $v$, and each agent has a bounded error rate $\epsilon < 0.5$. Formally:
\begin{align}
    P(X_j = 1 \mid y=0) &\le \epsilon \quad \text{(False Positive Rate)} \\
    P(X_j = 0 \mid y=1) &\le \epsilon \quad \text{(False Negative Rate)}
\end{align}
\end{assumption}

\begin{remark}[Justification for Conditional Independence via EA]
In practice, standard LLM agents often exhibit highly correlated errors due to shared pre-training data. \methodname{} mitigates this risk by introducing the meta-level Expert Agent (EA). For each sample, the EA dynamically synthesizes $R$ exclusive prompt templates $\{\pi_1, \dots, \pi_R\}$ with differentiated inductive biases. By encouraging distinct expert perspectives (e.g., marker matching vs. lineage exclusion), the EA diversifies the reasoning paths of the Verifier Agents (VAs) and provides a practical mechanism supporting the conditional independence assumption.
\end{remark}

Recall the consensus criterion used in the main text: the council accepts a proposed tuple $v$ only when all verifier agents agree.
Equivalently, the system accepts $v$ if and only if $X_1 = X_2 = \cdots = X_R = 1$, i.e., every agent approves the tuple as valid.
This is a \emph{strict unanimous-consensus} rule designed to suppress hallucinated logical steps.

\begin{theorem}[Unanimous Consensus Suppresses Hallucinations]
\label{thm:app_error_bound}
For a false tuple ($y=0$), the probability that the council incorrectly accepts it (Type I Error / hallucination) under unanimous consensus satisfies
\begin{equation}
    P(\text{Accept} \mid y=0) \le \epsilon^R .
\end{equation}
For a true tuple ($y=1$), the probability that the council rejects it satisfies
\begin{equation}
    P(\text{Reject} \mid y=1) \le 1 - (1-\epsilon)^R .
\end{equation}
In particular, the hallucination probability decays exponentially in the council size $R$.
\end{theorem}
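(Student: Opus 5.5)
The argument rests on Assumption~\ref{assum:bounded_error} and the strict unanimous-consensus rule stated just before the theorem. Under that rule the event $\{\text{Accept}\}$ is exactly $\bigcap_{j=1}^R \{X_j = 1\}$, and $\{\text{Reject}\}$ is its complement. Both bounds follow from factorizing the probability of this intersection by conditional independence. We condition on the tuple $v$ and on its validity $y$ throughout, and we read the assumption as saying that $X_1,\dots,X_R$ are mutually independent given $(v,y)$. Since $y$ is a deterministic function of $v$, conditioning on $v$ already fixes $y$, so this reading is harmless.

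\emph{Type I bound.} Fix $y=0$. By conditional independence,
\begin{equation*}
P(\text{Accept}\mid y=0)=P\Bigl(\textstyle\bigcap_{j=1}^R\{X_j=1\}\,\Bigm|\,y=0\Bigr)=\prod_{j=1}^R P(X_j=1\mid y=0).
\end{equation*}
Each factor is at most $\epsilon$ by the false-positive part of Assumption~\ref{assum:bounded_error}, so the product is at most $\epsilon^R$.

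\emph{Type II bound.} Fix $y=1$. Write $P(\text{Reject}\mid y=1)=1-P(\text{Accept}\mid y=1)$ and factor the acceptance probability in the same way:
\begin{equation*}
P(\text{Accept}\mid y=1)=\prod_{j=1}^R P(X_j=1\mid y=1).
\end{equation*}
The false-negative part of the assumption gives $P(X_j=1\mid y=1)\ge 1-\epsilon$ for every $j$. All factors lie in $[0,1]$, so the product is at least $(1-\epsilon)^R$, and therefore $P(\text{Reject}\mid y=1)\le 1-(1-\epsilon)^R$.

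\emph{Exponential decay.} Since $0\le\epsilon<1/2$, we can write $\epsilon^R=\exp\bigl(-R\ln(1/\epsilon)\bigr)$ with $\ln(1/\epsilon)>\ln 2>0$ (the case $\epsilon=0$ is trivial). This decays exponentially in $R$.

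The computation itself is routine. The only delicate point is formalizing the assumption: if the agents' errors were only marginally bounded and not conditionally independent, the product factorization would fail, and the best available bound on Type I error would be $\epsilon$, not $\epsilon^R$. The proof should therefore state explicitly that mutual (not merely pairwise) conditional independence given $(v,y)$ is what licenses the factorization.

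It is also worth remarking after the proof that the Type II bound grows toward $1$ as $R$ increases. This is the inherent trade-off of the unanimous rule, and it is consistent with the over-pruning observed empirically in Fig.~\ref{fig:agent_ablation}.
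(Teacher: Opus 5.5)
Your proof is correct and follows the paper's argument essentially step for step. Both identify acceptance with $\bigcap_{j=1}^R\{X_j=1\}$, factor it by conditional independence, bound each factor above by $\epsilon$ (or below by $1-\epsilon$), and take the complement for the rejection bound; your explicit rate $\epsilon^R=\exp(-R\ln(1/\epsilon))$ and your point that mutual conditional independence is what licenses the factorization are refinements the paper leaves implicit, not a different route.
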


\begin{proof}
Under Assumption~\ref{assum:bounded_error}, for a false tuple ($y=0$) each agent approves with probability at most $\epsilon$, i.e., $P(X_j=1\mid y=0)\le \epsilon$.
Under the unanimous-consensus rule, the tuple is accepted only if all $R$ agents approve:
\begin{equation}
P(\text{Accept}\mid y=0)=P(X_1=\cdots=X_R=1\mid y=0)=\prod_{j=1}^R P(X_j=1\mid y=0)\le \epsilon^R,
\end{equation}
where the product form follows from conditional independence.

Similarly, for a true tuple ($y=1$), each agent rejects with probability at most $\epsilon$, i.e., $P(X_j=0\mid y=1)\le \epsilon$, hence $P(X_j=1\mid y=1)\ge 1-\epsilon$.
The probability that all agents approve is at least $(1-\epsilon)^R$, so the rejection probability is bounded by
\begin{equation}
P(\text{Reject}\mid y=1)=1-P(\text{Accept}\mid y=1)\le 1-(1-\epsilon)^R.
\end{equation}
This proves the stated bounds and shows exponential suppression of hallucinations as $R$ increases.
\end{proof}

\subsection{Proof of Bounded Termination for Syllogistic Derivation Tree (Theorem 2)}
\label{app:theory_convergence}

\textbf{Setup.} Let $\mathcal{Y}$ denote the finite ontology of candidate cell types.
Given the input card $\mathbf{x}_i^{card}$, the Solve Agent produces a candidate set $\mathcal{C}_i^{cand}\subseteq \mathcal{Y}$.
At dialectic round $t\in\{1,\dots,T_{\max}\}$, each verifier agent outputs a tentative conclusion
$\hat{y}_{i,j}^{(t)} = VA_j(\mathbf{x}_i^{card},\pi_j, \mathcal{T}_{i}^{(t-1)})$.
The SDT state $\mathcal{T}_{i}^{(t-1)}$ stores the council's hypotheses and rebuttals up to round $t-1$.

\begin{theorem}[Bounded-Termination of SDT Construction]
\label{thm:app_convergence}
The SDT construction procedure terminates in at most $T_{\max}$ dialectic rounds.
If unanimous consensus is reached at some round $t\le T_{\max}$, the algorithm outputs the consensus label $y_i$.
Otherwise, it falls back to the Decision Agent and outputs an adjudicated label based on the final evidence tree.
\end{theorem}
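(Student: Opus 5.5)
The argument is a direct induction over the loop structure of the procedure in Section~\ref{sec:method_step3}, treating each agent call as a black box that always returns some output. We assume only that every call to $SA$, $EA$, $VA_j$ and $DA$ terminates and returns a well-formed object. Concretely, $VA_j$ returns a label in $\mathcal{C}_i^{cand}\subseteq\mathcal{Y}$, or at least some string that the normalization map sends to a label. This is the one modelling assumption the statement implicitly needs. It is natural here because each call is a single bounded-length LLM generation.

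The key steps, in order, are as follows.

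\textbf{Step 1 (bounded rounds).} The round index $t$ starts at $1$ and increases by exactly one per iteration. The loop guard is $t\le T_{\max}$, with $T_{\max}$ a fixed finite constant. Each round consists of exactly $R$ verifier calls, one SDT update $\mathcal{T}_i^{(t)}=\mathcal{T}_i^{(t-1)}\cup\{z_{i,j}^{(t)}\}_{j=1}^R$, and one consensus test on the normalized set $A_i^{(t)}$. Each of these operations terminates, because $R$ is finite, the union of finite sets is finite, and normalizing and comparing finitely many strings is finite. Hence the loop executes at most $T_{\max}$ iterations, and the total number of agent calls is at most $2+RT_{\max}+1$.

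\textbf{Step 2 (case split on the exit condition).} Let $t^\ast=\min\{t\le T_{\max}: |A_i^{(t)}|=1\}$ if this set is nonempty. In that case the unit is removed from the focus set at round $t^\ast$. The output is then the unique element $\hat y_i^{cons}=\hat y_{i,1}^{(t^\ast)}$, which is well defined because $|A_i^{(t^\ast)}|=1$ and $A_i^{(t^\ast)}\neq\emptyset$ (since $R\ge1$). Otherwise $|A_i^{(t)}|>1$ for every $t\le T_{\max}$, so in particular $i\in\mathcal{I}_{focus}$. The procedure then makes exactly one additional call, $DA(\mathbf{p}_i,\mathcal{T}_i^{(T_{\max})})$, and by the case definition of $\hat y_i$ it returns that call's output. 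These two cases are exhaustive and mutually exclusive, so exactly one of the two outputs described in Theorem~\ref{thm:app_convergence} is produced.

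\textbf{Step 3 (batch level).} For a batch $\mathcal{C}=\{c_1,\dots,c_B\}$ with $B$ finite, apply Steps~1--2 to each unit. The active focus set only shrinks, and the global round counter is the same bounded $t$. The whole batch therefore terminates after at most $T_{\max}$ rounds plus at most $B$ fallback calls.

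\textbf{Main obstacle.} The only delicate point is not combinatorial but one of modelling: the theorem is about the control flow, not about convergence of the agents' opinions. I would state explicitly that termination relies on each agent call halting, for example through a max-token limit, and on $T_{\max}<\infty$. I would also stress that no claim is made that consensus is ever reached. The fallback branch exists precisely so that termination does not depend on it.
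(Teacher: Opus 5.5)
Your proposal is correct and follows essentially the same argument as the paper. Both proofs rest on three points: the round counter is capped at $T_{\max}$, each round is a finite council interaction followed by the consensus test $|A_i^{(t)}|=1$, and the two exit branches (consensus return or Decision Agent fallback on $\mathcal{T}_i^{(T_{\max})}$) are exhaustive. Your explicit assumption that each agent call halts, your call count, and your batch-level extension only spell out what the paper's phrase ``finite council interaction'' leaves implicit.
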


\begin{proof}
At each round $t$, the algorithm performs a finite council interaction and then checks the unanimous-consensus condition
$|A_i^{(t)}| = 1$, where $A_i^{(t)} = \{\hat{y}_{i,1}^{(t)},\dots,\hat{y}_{i,R}^{(t)}\}$.
If the condition holds, the procedure halts immediately and returns the consensus label.
If not, the procedure updates the tree state $\mathcal{T}_{i}^{(t)}$ and increments $t$.
Since $t$ is bounded by the predefined maximum depth $T_{\max}$, the loop can execute at most $T_{\max}$ times.
Therefore, the procedure must terminate either by consensus at some $t\le T_{\max}$ or by reaching $t=T_{\max}$, after which the Decision Agent is invoked.
\end{proof}

\subsection{Proof of OOD Identifiability (Theorem 3)}
\label{app:theory_ood}

\textbf{Setup.} Inductive Anchoring constructs a focused evidence space by intersecting observed cluster genes with elicited marker axioms.
Let $\mathcal{G}_{obs}$ denote the observed evidence genes for a query cluster (e.g., Top genes and/or DEGs), and let $\mathrm{Span}(\mathcal{K})$ denote the union of marker genes elicited for the candidate set.
Define the anchored evidence set $\mathcal{S}=\mathcal{G}_{obs}\cap \mathrm{Span}(\mathcal{K})$.
We analyze identifiability using a marker gene $g\in \mathcal{S}$, where $X_g$ is its expression in the query cluster and $Y_g$ is its expression in the background context.

\begin{assumption}[Gaussian Signal Model]
\label{assum:gaussian}
We assume gene expression levels (after log-normalization) follow Gaussian distributions:
\begin{itemize}
    \item Background: $Y_g \sim P_{\text{in}}(g) = \mathcal{N}(\mu_{\text{in}}, \sigma^2)$
    \item Novel State: $X_g \sim P_{\text{ood}}(g) = \mathcal{N}(\mu_{\text{ood}}, \sigma^2)$
\end{itemize}
The signal magnitude is defined as the shift $\Delta \mu = |\mu_{\text{ood}} - \mu_{\text{in}}|$.
\end{assumption}

We use the Contextual Divergence score $D_{\text{ctx}}(g)=|X_g-\mu_{\text{in}}|$ as a simple proxy for marker saliency under anchored evidence, and show it is statistically distinguishable from noise when the marker exhibits a mean shift.

\begin{theorem}[Asymptotic Separability]
\label{thm:app_ood}
For any error probability $\delta > 0$, there exists a signal-to-noise ratio threshold such that if $\frac{\Delta \mu}{\sigma} > \Phi^{-1}(1 - \delta/2) + \Phi^{-1}(1 - \delta)$, the Contextual Divergence score will identify the marker gene with probability at least $1-\delta$.
\end{theorem}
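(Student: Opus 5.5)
The plan is to read ``identify the marker gene'' as a threshold test on the Contextual Divergence score. We declare $g$ salient if and only if $D_{\text{ctx}}(g)=|X_g-\mu_{\text{in}}|>\tau$ for a threshold $\tau$ fixed in advance. We then show that both error probabilities of this test are at most $\delta$ under the stated signal-to-noise condition. Since the theorem is phrased loosely, I would first make precise the two requirements implicit in ``statistically distinguishable from noise'':
\begin{itemize}
\item a \emph{false alarm} bound: a background draw with the law $\mathcal{N}(\mu_{\text{in}},\sigma^2)$ of Assumption~\ref{assum:gaussian} exceeds $\tau$ with probability at most $\delta$;
\item a \emph{detection} bound: a draw from the novel-state law $\mathcal{N}(\mu_{\text{ood}},\sigma^2)$ exceeds $\tau$ with probability at least $1-\delta$.
\end{itemize}
The two quantile terms in the hypothesis should correspond exactly to these two requirements.

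\textbf{Step 1 (calibrating $\tau$).} Under the background law, $(X-\mu_{\text{in}})/\sigma$ is standard normal. Because $D_{\text{ctx}}$ is an absolute value, the tail is two-sided:
\[
P(|X-\mu_{\text{in}}|>\tau)=2\bigl(1-\Phi(\tau/\sigma)\bigr).
\]
Setting $\tau=\sigma\,\Phi^{-1}(1-\delta/2)$ makes this equal to $\delta$. This accounts for the first term $\Phi^{-1}(1-\delta/2)$.

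\textbf{Step 2 (detection under the shift).} By symmetry, assume without loss of generality that $\mu_{\text{ood}}>\mu_{\text{in}}$, so $\Delta\mu=\mu_{\text{ood}}-\mu_{\text{in}}$. A miss means $|X_g-\mu_{\text{in}}|\le\tau$, which implies $X_g-\mu_{\text{in}}\le\tau$, that is, $X_g-\mu_{\text{ood}}\le\tau-\Delta\mu$. Hence
\[
P(\text{miss})\le\Phi\!\left(\frac{\tau-\Delta\mu}{\sigma}\right)=\Phi\!\left(\Phi^{-1}(1-\delta/2)-\frac{\Delta\mu}{\sigma}\right).
\]
The hypothesis $\Delta\mu/\sigma>\Phi^{-1}(1-\delta/2)+\Phi^{-1}(1-\delta)$ makes the argument of $\Phi$ smaller than $-\Phi^{-1}(1-\delta)=\Phi^{-1}(\delta)$. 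Monotonicity of $\Phi$ then gives $P(\text{miss})<\delta$, so the marker is flagged with probability at least $1-\delta$. Dropping the lower tail of the absolute value here only loosens the bound, so this step needs no further care.

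\textbf{Main obstacle.} The real work is interpretive, not computational. The theorem does not specify the decision rule, and the phrase ``there exists a threshold'' conflates the SNR threshold with the score threshold $\tau$. I would therefore state the test explicitly and note that the $\delta/2$ comes from the two-sided calibration of $|X_g-\mu_{\text{in}}|$. I would also note that $\mu_{\text{in}}$ is treated as known, as the definition of $D_{\text{ctx}}$ presumes. If $\mu_{\text{in}}$ were estimated from background samples, one would add a concentration term for the estimate. The ``asymptotic'' wording would then refer to that estimation error vanishing as the number of background cells grows, which I would mention as a remark rather than prove.
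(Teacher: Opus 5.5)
Your proposal is correct and follows essentially the same route as the paper's proof. Both arguments calibrate the two-sided threshold $\eta=\sigma\,\Phi^{-1}(1-\delta/2)$ under the null, take $\mu_{\text{ood}}>\mu_{\text{in}}$ without loss of generality, lower-bound the power by the one-sided tail $\Phi\bigl((\Delta\mu-\eta)/\sigma\bigr)$, and set $\alpha=\beta=\delta$; the only difference is that you argue sufficiency directly (hypothesis $\Rightarrow$ miss probability $<\delta$), whereas the paper derives the SNR condition from the power requirement, which is the same content via monotonicity of $\Phi$.
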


\begin{proof}
Let $Z = X_g - \mu_{\text{in}}$. Under Assumption~\ref{assum:gaussian}:
\begin{itemize}
    \item Under Null Hypothesis $H_0$ (Noise gene, $\mu_{\text{ood}} = \mu_{\text{in}}$): $Z \sim \mathcal{N}(0, \sigma^2)$.
    \item Under Alternative Hypothesis $H_1$ (Marker gene, $\mu_{\text{ood}} \neq \mu_{\text{in}}$): $Z \sim \mathcal{N}(\Delta \mu, \sigma^2)$ (assuming $\mu_{\text{ood}} > \mu_{\text{in}}$ w.l.o.g.).
\end{itemize}

The detection rule is $D_{\text{ctx}}(g) > \eta$, where $\eta$ is a critical value determined by the significance level $\alpha$ (False Positive Rate).
To control FPR at $\alpha$, we set $\eta$ such that $P(|Z| > \eta \mid H_0) = \alpha$.
Using the properties of the standard normal CDF $\Phi$:
\begin{equation}
    \eta = \sigma \cdot \Phi^{-1}(1 - \alpha/2)
\end{equation}

The Probability of Detection (Power) is $P(D_{\text{ctx}}(g) > \eta \mid H_1)$.
\begin{align}
    P(|Z| > \eta \mid H_1) &\ge P(Z > \eta \mid H_1) \quad (\text{since } Z \text{ is shifted positive}) \\
    &= P\left( \frac{Z - \Delta \mu}{\sigma} > \frac{\eta - \Delta \mu}{\sigma} \right) \\
    &= 1 - \Phi\left( \frac{\eta - \Delta \mu}{\sigma} \right) \\
    &= \Phi\left( \frac{\Delta \mu - \eta}{\sigma} \right)
\end{align}
We require the detection probability to be at least $1 - \beta$ (where $\beta$ is the Type II error rate). Let $\beta = \delta$ and $\alpha = \delta$.
\begin{equation}
    \Phi\left( \frac{\Delta \mu - \eta}{\sigma} \right) \ge 1 - \delta \implies \frac{\Delta \mu - \eta}{\sigma} \ge \Phi^{-1}(1 - \delta)
\end{equation}
Substituting $\eta$:
\begin{equation}
    \frac{\Delta \mu - \sigma \Phi^{-1}(1 - \delta/2)}{\sigma} \ge \Phi^{-1}(1 - \delta)
\end{equation}
\begin{equation}
    \frac{\Delta \mu}{\sigma} \ge \Phi^{-1}(1 - \delta/2) + \Phi^{-1}(1 - \delta)
\end{equation}
This inequality relates the signal-to-noise ratio (SNR) $\Delta\mu/\sigma$ to the target error level $\delta$.
It shows that, once Inductive Anchoring restricts reasoning to $\mathcal{S}=\mathcal{G}_{obs}\cap \mathrm{Span}(\mathcal{K})$, markers with sufficient mean shift are identified with high probability, whereas non-specific housekeeping noise outside $\mathrm{Span}(\mathcal{K})$ is excluded by construction.
Equivalently, in information-theoretic terms, the KL divergence $D_{\text{KL}}(P_{\text{ood}}\parallel P_{\text{in}})=(\Delta\mu)^2/(2\sigma^2)$ increases with the marker shift, yielding higher detection power and providing a principled basis for separating OOD states when discriminative markers exist in the retrieved span.
\end{proof}

\subsection{Necessity of Multi-Agent Reasoning}
\label{appendix:necessity}

To empirically validate the necessity of the multi-agent dialectic architecture over single-agent GPTCelltype-style prompting, we tracked internal state metrics during the inference process. The following data was collected on the Human dataset using the DEG view ($n=200$ batches).

\textbf{1. Substantial Gains in Cell-Level Annotation.}
First, the multi-agent architecture yields immediate and significant performance benefits. As shown in Part I of Table~\ref{tab:app_necessity}, relying solely on single-agent GPTCelltype prompting achieves a cell-level mean partial accuracy of only 0.4278. In contrast, by introducing multi-agent debate and tree-based reasoning, the partial accuracy of \methodname{} leaps to 0.7268. This demonstrates that precise annotation of cell types is a complex task requiring structured evidence auditing, where single-pass generation is highly susceptible to hallucinations.

\textbf{2. Debate Effectively Drives Consensus Convergence.}
To verify that the ``debate'' process genuinely resolves logical disagreements rather than merely casting blind votes, we analyzed the full agreement rate among the Verifier Agents (VAs). As reported in Part II of Table~\ref{tab:app_necessity}, prior to the debate (first VA round), the initial consensus rate is only 0.7632, largely because each agent is assigned a differentiated inductive bias by the Expert Agent (EA). However, after multiple rounds of structured dialectic verification and mutual evidence auditing on the shared tree state, the agreement rate at the final round (last VA round) surges to 0.9052. This marked increase in agreement suggests that the system mitigates many logical conflicts through mutual error correction. The high per-batch Pearson correlation ($r=0.7538$) further underscores the stability of this consensus-checking process.

\textbf{3. Decision Agent (DA) Effectively Resolves Boundary Cases.}
We further analyzed the fallback frequency of the final adjudication mechanism (Decision Agent, DA) and its performance on highly challenging samples. As detailed in Part III of Table~\ref{tab:app_necessity}, the vast majority of batches (88.50\%, $n=177$) successfully reach a natural consensus within the predefined maximum rounds, achieving a high partial accuracy of 0.7333.
Only 11.50\% of extremely difficult batches ($n=23$) fail to reach consensus during the multi-round debate, thereby triggering the DA fallback mechanism. It is worth noting that, although these samples represent complex boundary cases where even expert agents cannot reach a unanimous agreement, the DA manages to achieve a respectable accuracy of 0.6765 by reviewing the complete dispute branch of the Syllogistic Derivation Tree. This indicates that Contextual Synthesis and the DA's global adjudication mechanism provide a practical fallback, improving system robustness under extreme heterogeneity.

\begin{table}[ht]
\centering
\caption{\textbf{Empirical Metrics on the Necessity of Multi-Agent Reasoning.}
This table reports the internal state metrics from the Human (DEG view) dataset (total $n=200$ batches) to quantitatively demonstrate the necessity and efficacy of multi-agent debate and contextual synthesis.}
\label{tab:app_necessity}
\resizebox{0.9\textwidth}{!}{
\begin{tabular}{l l c}
\toprule
\textbf{Analytical Dimension} & \textbf{Metric Description} & \textbf{Value} \\
\midrule
\multirow{2}{*}{\textbf{Part I: Overall Performance Gains}}
& Cell-level Avg. Partial Accuracy (GPTCelltype baseline) & 0.4278 \\
& Cell-level Avg. Partial Accuracy (\methodname{})    & \textbf{0.7268} \\
\midrule
\multirow{3}{*}{\textbf{Part II: Debate Convergence}}
& Pre-debate (First Round) Full Agreement Rate (Mean) & 0.7632 \\
& Post-debate (Last Round) Full Agreement Rate (Mean) & \textbf{0.9052} \\
& Pearson $r$ (Pre vs. Post Agreement, per batch)      & 0.7538 \\
\midrule
\multirow{3}{*}{\textbf{Part III: Decision Agent (DA) Efficacy}}
& DA Fallback Frequency (Disputed Batches)            & 11.50\% ($n=23$) \\
& Avg. Partial Accuracy on Consensus Batches (No DA)  & 0.7333 ($n=177$) \\
& Avg. Partial Accuracy on Fallback Batches (by DA)   & \textbf{0.6765} ($n=23$) \\
\bottomrule
\end{tabular}
}
\end{table}

\newpage

\section{Dataset Documentation}
\label{appendix:dataset}

\subsection{Evaluation Baselines}
\label{appendix:evaluation_baselines}

The main experiments compare \methodname{} against three groups of baselines.
First, classical marker- or signature-based annotators include scCATCH~\citep{shao2020sccatch}, Cell-ID~\citep{cortal2021cellid}, CellTypist~\citep{dominguez2022cross}, and CellMarker2.0~\citep{hu2023cellmarker2}.
Second, generic LLM prompting baselines include GPTCelltype prompting, following the single-cell GPT-4 annotation study of \citet{hou2024assessing}, and CoT prompting, following zero-shot chain-of-thought prompting~\citep{kojima2022large}.
Third, recent LLM-based annotation systems include Cell-o1~\citep{fang2025a} and scPilot~\citep{gao2025scpilot}.
In the main tables, baselines are ordered chronologically to make clear which comparisons are against older marker/signature systems and which are against recent LLM-based methods.

\subsection{Dataset Summary Statistics}

\noindent We summarize the open-candidate datasets by their most frequent annotated cell types and the oracle-candidate benchmarks by compact species-level statistics. This grouped layout is intended to make dataset scale, label diversity, and dominant cell populations easier to compare across settings.

\begin{table*}[t]
\centering
\scriptsize
\caption{Top cell types in the open-candidate datasets. Counts report the most frequent annotated cell types in each dataset.}
\label{tab:open_candidate_celltypes}
\setlength{\tabcolsep}{3pt}
\begin{minipage}[t]{0.32\textwidth}
\centering
\textbf{PBMC3K}\par\vspace{0.25em}
\begin{tabular}{@{}p{0.72\linewidth}r@{}}
\toprule
\textbf{Cell type} & \textbf{Count} \\
\midrule
CD4+ T Cells & 1,157 \\
Classical Monocytes & 479 \\
B Cells/Dendritic Cells & 341 \\
Effector/Activated T Cells & 297 \\
Effector/Activated T Cells/NK Cells & 158 \\
Non-Classical Monocytes & 157 \\
Dendritic Cells/Monocytes & 36 \\
Platelets/Megakaryocytes & 13 \\
\bottomrule
\end{tabular}
\end{minipage}
\hfill
\begin{minipage}[t]{0.32\textwidth}
\centering
\textbf{Liver}\par\vspace{0.25em}
\begin{tabular}{@{}p{0.72\linewidth}r@{}}
\toprule
\textbf{Cell type} & \textbf{Count} \\
\midrule
B cells & 7,565 \\
Neutrophils & 6,920 \\
Hepatocytes & 6,363 \\
Erythrocytes & 6,277 \\
NK cells & 4,993 \\
Liver sinusoidal endothelial cells & 4,797 \\
Macrophages & 3,518 \\
Fibroblasts/hepatic stellate cells & 795 \\
Cholangiocytes & 116 \\
\bottomrule
\end{tabular}
\end{minipage}
\hfill
\begin{minipage}[t]{0.32\textwidth}
\centering
\textbf{Retina}\par\vspace{0.25em}
\begin{tabular}{@{}p{0.72\linewidth}r@{}}
\toprule
\textbf{Cell type} & \textbf{Count} \\
\midrule
Photoreceptor cells & 10,641 \\
Astrocytes & 4,148 \\
Neurons & 3,251 \\
Microglia & 1,174 \\
Bipolar neurons & 437 \\
Retinal ganglion cells & 336 \\
T cells & 104 \\
\bottomrule
\end{tabular}
\end{minipage}
\vspace{0.8em}

\begin{minipage}[t]{0.63\textwidth}
\centering
\textbf{Brain}\par\vspace{0.25em}
\begin{tabular}{@{}p{0.78\linewidth}r@{}}
\toprule
\textbf{Cell type} & \textbf{Count} \\
\midrule
oligodendrocyte & 234,151 \\
L2/3 intratelencephalic projecting glutamatergic neuron & 88,102 \\
astrocyte & 86,115 \\
L2/3--6 intratelencephalic projecting glutamatergic neuron & 63,404 \\
microglial cell & 30,764 \\
oligodendrocyte precursor cell & 30,670 \\
VIP GABAergic cortical interneuron & 29,838 \\
pvalb GABAergic cortical interneuron & 27,736 \\
sst GABAergic cortical interneuron & 20,336 \\
L6 intratelencephalic projecting glutamatergic neuron & 13,306 \\
\bottomrule
\end{tabular}
\end{minipage}
\hfill
\begin{minipage}[t]{0.34\textwidth}
\centering
\textbf{Heart}\par\vspace{0.25em}
\begin{tabular}{@{}p{0.72\linewidth}r@{}}
\toprule
\textbf{Cell type} & \textbf{Count} \\
\midrule
cardiac endothelial cell & 1,606 \\
myeloid cell & 269 \\
fibroblast of cardiac tissue & 207 \\
pericyte & 159 \\
cardiac muscle cell & 136 \\
lymphocyte & 63 \\
cardiac neuron & 44 \\
smooth muscle cell & 30 \\
\bottomrule
\end{tabular}
\end{minipage}
\end{table*}

\begin{table*}[t]
\centering
\small
\caption{Summary statistics for the oracle-candidate benchmarks.}
\label{tab:oracle_dataset_stats}
\setlength{\tabcolsep}{8pt}
\begin{tabular}{l r r r}
\toprule
\textbf{Property} & \textbf{Human} & \textbf{Mouse} & \textbf{Monkey} \\
\midrule
Number of batches & 2,400 & 2,400 & 2,400 \\
Total cells & 27,588 & 27,583 & 25,871 \\
Number of cell types & 75 & 123 & 121 \\
Unique top genes & 5,583 & 6,941 & 7,757 \\
Unique DEG genes & 1,434 & 2,432 & 2,485 \\
Cells per batch & 7--15 & 7--15 & 7--15 \\
\bottomrule
\end{tabular}
\end{table*}

\begin{table*}[t]
\centering
\scriptsize
\caption{Top cell types in the oracle-candidate benchmarks. Counts report the most frequent annotated cell types for each species.}
\label{tab:oracle_celltypes}
\setlength{\tabcolsep}{3pt}
\begin{minipage}[t]{0.32\textwidth}
\centering
\textbf{Human}\par\vspace{0.25em}
\begin{tabular}{@{}p{0.72\linewidth}r@{}}
\toprule
\textbf{Cell type} & \textbf{Count} \\
\midrule
Oligodendrocyte & 1,593 \\
L2/3--6 intratelencephalic projecting glutamatergic neuron & 1,560 \\
Astrocyte & 1,544 \\
Oligodendrocyte precursor cell & 1,508 \\
L2/3 intratelencephalic projecting glutamatergic neuron & 1,497 \\
Pvalb GABAergic cortical interneuron & 1,488 \\
Microglial cell & 1,445 \\
VIP GABAergic cortical interneuron & 1,402 \\
Sst GABAergic cortical interneuron & 1,243 \\
Lamp5 GABAergic cortical interneuron & 1,194 \\
\bottomrule
\end{tabular}
\end{minipage}
\hfill
\begin{minipage}[t]{0.32\textwidth}
\centering
\textbf{Mouse}\par\vspace{0.25em}
\begin{tabular}{@{}p{0.72\linewidth}r@{}}
\toprule
\textbf{Cell type} & \textbf{Count} \\
\midrule
Fibroblast & 924 \\
Epithelial cell of proximal tubule segment 1 & 903 \\
Epithelial cell of proximal tubule segment 2 & 853 \\
Kidney distal convoluted tubule epithelial cell & 849 \\
Epithelial cell of proximal tubule & 820 \\
Kidney collecting duct principal cell & 807 \\
Macrophage & 796 \\
Kidney connecting tubule epithelial cell & 753 \\
Epithelial cell of proximal tubule segment 3 & 727 \\
Kidney loop of Henle thick ascending limb epithelial cell & 579 \\
\bottomrule
\end{tabular}
\end{minipage}
\hfill
\begin{minipage}[t]{0.32\textwidth}
\centering
\textbf{Monkey}\par\vspace{0.25em}
\begin{tabular}{@{}p{0.72\linewidth}r@{}}
\toprule
\textbf{Cell type} & \textbf{Count} \\
\midrule
alveolar macrophage & 551 \\
endothelial cell & 544 \\
lymphocyte & 541 \\
vein endothelial cell & 540 \\
plasma cell & 536 \\
CD4-positive, alpha-beta T cell & 522 \\
hematopoietic precursor cell & 521 \\
CD8-positive, alpha-beta T cell & 520 \\
neutrophil & 514 \\
natural killer cell & 505 \\
\bottomrule
\end{tabular}
\end{minipage}
\end{table*}

\subsection{Top-25 Highly Expressed Genes}

To characterize the global transcriptional landscape, we report the 25 most frequently observed genes across all cells for each species, excluding mitochondrial and ribosomal genes.

\begin{table*}[t]
\centering
\scriptsize
\caption{Top-25 highly expressed genes after excluding mitochondrial and ribosomal genes.}
\label{tab:top25_genes}
\setlength{\tabcolsep}{4pt}
\renewcommand{\arraystretch}{1.08}
\begin{tabular}{@{}p{0.16\textwidth}p{0.78\textwidth}@{}}
\toprule
\textbf{Dataset} & \textbf{Top-25 genes} \\
\midrule
Brain & CNTNAP2, DSCAM, DPP10, ROBO2, KCNIP4, GRIP1, ZNF385D, EPIC1, CA10, FSTL4, ARL15, HTR1F, FOXP2, FSTL5, MYO16, PTCHD4, CLSTN2, CPNE4, NRG1, DTNA, CBLN2, CDH9, SLIT3, SLIT2, UNC13C \\
Heart & RGS6, ANK3, LINC02147, CLIC5, HIGD1B, PIK3R5, SLC38A11, MLIP, TRDN-AS1, MYBPC3, XIRP2, MYH6, ACACB, PRKAG2, ACTA1, SH3RF2, PPP1R3C, ENSG00000230490, MYL7, LINC02552, ENSG00000258231, ENSG00000271959, HECW2, FRMD5, G0S2 \\
Liver & H3f3a, Ubb, Tmsb10, Hba-a1, Gpx1, Hba-a2, Hbb-bt, S100a8, S100a9, Apoe, Igfbp7, Sparc, S100a6, Cd24a, Pglyrp1, Gm5483, BC100530, Stfa1, Anxa1, Ifitm6, Serpina3k, Mup20, Gnmt, Cd7, Ccl4 \\
PBMC3K & B2M, RPL13, MALAT1, RPL21, TPT1, RPL10A, ACTB, RPL8, H3F3B, RPS3A, RPS5, EEF1D, RPS27A, FTH1, MT-CO2, CD74, FTL, OAZ1, CD37, CD79A, FCGR3A, LST1, COTL1, GNLY, GZMB \\
Retina & FTH1, RHO, APOE, RBP3, WIF1, GLUL, CLU, PTGDS, TF, MFGE8, MPP4, FRZB, CRABP1, SPP1, ENO1, DKK3, RLBP1, CA2, GPX3, CRYAB, CADPS, NEAT1, NR2E3, YPEL2, CNGA1 \\
Human & MALAT1, ACTB, ACTG1, B2M, TMSB4X, FTH1, GAPDH, RPL13, RPS27, RPL41, RPL21, EEF1A1, TPT1, RPS3A, RPL32, RPL3, RPS2, RPS18, RPS6, RPS12, RPL10, RPL34, RPS27A, RPL13A, RPL11 \\
Mouse & Malat1, Actb, Gapdh, B2m, Tmsb4x, Rpl13, Rpl41, Rps27, Eef1a1, Rpl21, Rpl32, Rps3a, Rpl3, Rps18, Rps2, Rpl10, Rps12, Rpl34, Rps6, Rpl13a, Rps27a, Rpl11, Ftl1, Fth1, Tpt1 \\
Monkey & ZC3H10, RPS18, TMSB4Y, RPS27, RPLP1, RPS28, FTL, ACTB, RPL37, FTH1, RPS19, RPS14, RPS15, RPS12, RPL13A, RPS23, RPL13, RPL37A, RPS27A, RPS15A, FAU, RPS17, RPLP0, B2M, RPL23A \\
\bottomrule
\end{tabular}
\end{table*}

\subsection{Differentially Expressed Genes (DEG) Criteria}
\label{appendix:deg_criteria}

We precompute DE marker genes at the \textbf{cell-type level} using a one-vs-rest differential expression test (Scanpy \texttt{rank\_genes\_groups}), and then attach the resulting marker list to each cell based on its \texttt{cell\_type} annotation (i.e., markers are not computed per cell on-the-fly).

\begin{itemize}
  \item \textbf{Statistical test:} Wilcoxon rank-sum test (one-vs-rest, grouped by \texttt{cell\_type}).
  \item \textbf{Log fold-change threshold:} $|\log_2 \mathrm{FC}| \ge 1.0$.
  \item \textbf{Adjusted p-value threshold:} $\mathrm{FDR} \le 0.05$ (Benjamini--Hochberg correction; \texttt{pvals\_adj} in Scanpy).
  \item \textbf{Expression proportion threshold:} keep genes with non-zero expression proportion satisfying
  $\mathrm{pct}_{\text{target}} \ge 0.1$ \textbf{or} $\mathrm{pct}_{\text{reference}} \ge 0.1$
  (corresponding to \texttt{pct\_nz\_group} / \texttt{pct\_nz\_reference}).
  \item \textbf{Top-N truncation:} for each cell type, rank genes by ascending $\mathrm{FDR}$ and descending $\log_2 \mathrm{FC}$, and keep the top 25 genes.
  \item \textbf{Rare type filtering:} cell types with fewer than 3 cells are excluded from DEG computation to avoid unstable statistics.
\end{itemize}

\subsection{Data Processing and Instance Construction}
\label{appendix:data_processing}

We convert each raw \texttt{.h5ad} file into structured per-cell records and further organize them into batch-level instances for LLM inference. The processing steps are:

\begin{enumerate}
  \item \textbf{Subsampling (size control):} if a file contains more than \texttt{max\_cells} cells, we randomly subsample to \texttt{max\_cells} cells to control runtime and output size.
  \item \textbf{Top expressed genes:} for each cell, we extract the top-25 expressed genes from the expression matrix $\mathbf{X}$ using an efficient partition-based selection (\texttt{np.argpartition}) and then sort them by expression in descending order.
  \item \textbf{Gene name normalization:} if \texttt{feature\_name} is available in \texttt{adata.var}, we use it as a human-readable gene symbol; additionally, names of the form \texttt{SYMBOL\_ENSG...} are truncated to \texttt{SYMBOL}.
  \item \textbf{Type-level DEG attachment:} if \texttt{cell\_type} is present in \texttt{adata.obs}, we compute DEG markers once per file using the criteria above and attach the corresponding top-25 marker list (\texttt{deg\_markers}) to each cell based on its \texttt{cell\_type}. If no valid markers exist for a type (after filtering), \texttt{deg\_markers} is omitted.
  \item \textbf{Context construction:} we build a natural-language context string from available metadata fields (e.g., \texttt{disease}, \texttt{tissue}, \texttt{sex}, \texttt{development\_stage}, and \texttt{self\_reported\_ethnicity}), and append the top expressed genes to form the final context used by the LLM.
\end{enumerate}

\newpage

\section{Detailed Methodology}
\label{appendix:method_details}

\subsection{Overall Inference Procedure}

\begin{algorithm}[H]
\footnotesize
\caption{MAT-Cell Inference via Syllogistic Debate}
\label{alg:matcell}
\begin{algorithmic}[1]
\REQUIRE Analytical unit $c_i$ with observed genes $\mathcal{G}_{obs}^{(i)}$ and context $\mathbf{ctx}_i$, optional candidate set $\mathcal{Y}_{cand}$, agents $\{\mathrm{SA}, \mathrm{EA}, \mathrm{VA}_{1:R}, \mathrm{DA}\}$, Verifier Agent count $R$, max rounds $T_{\max}$
\ENSURE Predicted label $\hat{y}_i$

\STATE Elicit RVQ prior hypotheses: $\mathcal{K}_i \leftarrow \mathrm{RVQ}(\mathcal{G}_{obs}^{(i)}, \mathbf{ctx}_i, \mathcal{Y}_{cand})$
\STATE Assemble structured premise tuple $\mathbf{p}_i \leftarrow (\mathcal{G}_{obs}^{(i)}, \mathcal{K}_i, \mathbf{ctx}_i)$
\STATE Generate layer-0 node (initial hypothesis): $z_i^{(0)} \leftarrow \mathrm{SA}(\mathbf{p}_i)$
\STATE Extract local candidate-label set $\mathcal{Y}_i^{(0)}$ from $z_i^{(0)}$
\STATE Generate $R$ verifier roles: $\{\pi_{i,1}, \dots, \pi_{i,R}\} \leftarrow \mathrm{EA}(\mathbf{ctx}_i, \mathcal{Y}_i^{(0)}; R)$
\STATE Initialize Syllogistic Derivation Tree (SDT) $\mathcal{T}_i^{(0)} \leftarrow \{z_i^{(0)}\}$

\FOR{$t=1$ to $T_{\max}$}
    \FOR{$j=1$ to $R$}
        \STATE $z_{i,j}^{(t)} \leftarrow \mathrm{VA}_j(\mathbf{p}_i, \pi_{i,j}, \mathcal{T}_i^{(t-1)})$ \COMMENT{Syllogistic verification branch}
    \ENDFOR
    \STATE Update SDT: $\mathcal{T}_i^{(t)} \leftarrow \mathcal{T}_i^{(t-1)} \cup \{z_{i,j}^{(t)}\}_{j=1}^R$
    \STATE Collect answers at current round: $A_i^{(t)} \leftarrow \{\hat{y}_{i,1}^{(t)}, \dots, \hat{y}_{i,R}^{(t)}\}$
    \IF{$|A_i^{(t)}| == 1$}
        \STATE $\hat{y}_i \leftarrow \hat{y}_{i,1}^{(t)}$ \COMMENT{Consensus reached}
        \STATE \textbf{return} $\hat{y}_i$
    \ENDIF
\ENDFOR

\STATE \COMMENT{Contextual Synthesis for unresolved units}
\STATE Add unit $i$ to focus set $\mathcal{I}_{focus}$
\STATE $\hat{y}_i \leftarrow \mathrm{DA}(\mathbf{p}_i, \mathcal{T}_i^{(T_{\max})})$ \COMMENT{Adjudicate unresolved branch}
\STATE \textbf{return} $\hat{y}_i$
\end{algorithmic}
\end{algorithm}

\subsection{Inductive Anchoring Algorithm}

\subsubsection{Algorithmic Pseudocode}

\begin{algorithm}[H]
\caption{Inductive Anchoring: Candidate-Set Anchoring \& Marker Filtering}
\label{alg:anchoring}
\begin{algorithmic}[1]
\INPUT Raw expression matrix $\mathbf{X} \in \mathbb{R}^{n \times g}$, cluster labels $\mathbf{z}\in\{1,\dots,C\}^n$, Reverse Verification Query (RVQ) knowledge base $\mathcal{K}$, top-$K$ budget
\OUTPUT Anchored candidate set $\mathcal{C}_{\text{anchor}}$ and filtered marker set $\mathcal{G}_{\text{filtered}}$ for each cluster

\STATE \textbf{Quality control:} filter low-quality cells and rare genes (e.g., min-genes-per-cell, min-cells-per-gene)
\STATE \textbf{Normalize \& log:} total-count normalization and $\log(1+x)$ transform
\STATE \textbf{Global gene screening:} select highly-variable genes (HVGs) and restrict $\mathbf{X}$ to the HVG space
\STATE \textbf{Cluster-level summarization:} for each cluster $c\in\{1,\dots,C\}$,
\STATE \quad compute \emph{Top} genes $\mathcal{G}^{\text{top}}_c$ by ranking mean expression within cluster $c$
\STATE \quad compute \emph{DEG} genes $\mathcal{G}^{\text{deg}}_c$ via cluster-vs-rest differential test (e.g., Wilcoxon)
\STATE \quad remove ubiquitous / non-informative genes from $\mathcal{G}^{\text{top}}_c \cup \mathcal{G}^{\text{deg}}_c$ (e.g., housekeeping, MT/RP genes)
\STATE \quad build the marker pool $\mathcal{G}_{c} \leftarrow \text{TopK}\big(\mathcal{G}^{\text{top}}_c \cup \mathcal{G}^{\text{deg}}_c, K\big)$
\STATE \textbf{RVQ anchoring:} query $\mathcal{K}$ with $\mathcal{G}_{c}$ and retrieve a ranked label list $\mathcal{R}_c=\{(\ell, s_\ell)\}$
\STATE \quad keep top-$L$ labels as the anchored candidate set $\mathcal{C}_{\text{anchor}}(c)\leftarrow \{\ell_1,\dots,\ell_L\}$
\STATE \textbf{return} $\{(\mathcal{C}_{\text{anchor}}(c), \mathcal{G}_{c})\}_{c=1}^{C}$
\end{algorithmic}
\end{algorithm}

\subsection{Dialectic Verification Mechanism}

\subsubsection{Agent Configuration}

\begin{itemize}
\item \textbf{Agent Types:} \texttt{ExpertAgent} (Expert Agent, EA; dynamically synthesizes $R$ exclusive prompt templates), \texttt{SolveAgent} (Solve Agent, SA; constructs the initial Layer-0 SDT proposal under anchored candidates), \texttt{VerifierAgent} (Verifier Agent, VA; audits, challenges, and revises candidate branches using role-specific prompt templates), and \texttt{DecisionAgent} (Decision Agent, DA; aggregates surviving evidence and outputs the final decision for unresolved boundary cases).
\item \textbf{Instantiation:} 4 agent types in total; \texttt{VerifierAgent} uses $R=5$ parallel instances, while \texttt{ExpertAgent}, \texttt{SolveAgent}, and \texttt{DecisionAgent} each use 1 instance.
\item \textbf{LLM Model:} Qwen3-30B (Local Deployment)
\item \textbf{Temperature:} 0.7
\item \textbf{Max tokens:} 20000 per response
\end{itemize}

\subsubsection{Exact-Match Convergence Criterion}

Unlike soft semantic similarity, our debate halts only under \emph{strict agreement}. Let $\hat{y}_j^{(t)}$ denote the final structured decision produced by the $j$-th \texttt{VerifierAgent} at round $t$. We define the consensus condition based on the cardinality of the unique answer set $A_i^{(t)}$:
$$
A_i^{(t)} = \bigcup_{j=1}^R \{\hat{y}_{i,j}^{(t)}\}
$$
Convergence criterion: $|A_i^{(t)}| = 1$, i.e., all agents output decisions that are \textbf{identical after normalization}.

\subsection{Syllogistic Rule Set}

\begin{table}[h]
\centering
\small
\caption{Biological Syllogistic Rules (Example)}
\label{tab:rules}
\begin{tabular}{l l l}
\toprule
\textbf{Rule ID} & \textbf{Markers (IF)} & \textbf{Cell Type (THEN)} \\
\midrule
R1 & CD4, IL7R, TCF7 & CD4+ T-Cell \\
R2 & CD8A, CD8B, GZMA & CD8+ T-Cell \\
R3 & CD14, LYZ, FCGR3A & Monocyte \\
R4 & PPBP, PF4, TUBB1 & Platelet \\
\bottomrule
\end{tabular}
\end{table}

\subsection{Hyperparameter Settings}

\begin{table}[h]
\centering
\small
\caption{Hyperparameters for MAT-Cell}
\label{tab:hyperparams}
\begin{tabular}{l l}
\toprule
\textbf{Parameter} & \textbf{Value} \\
\midrule
Number of Agent Types & 4 \\
\#ExpertAgent (EA) Instances & 1 \\
\#SolveAgent (SA) Instances & 1 \\
Council Scale ($R$, \#VerifierAgent/VA) & 5 \\
\#DecisionAgent (DA) Instances & 1 \\
Top-K genes (Both view) & 25 \\
Dialectic Depth ($T_{\max}$) & 3 \\
Convergence Criterion & Normalized exact-match ($|A_i^{(t)}| = 1$) \\
Temperature & 0.7 \\
Max tokens & 20000 \\
\bottomrule
\end{tabular}
\end{table}

\subsection{Syllogistic Derivation Tree (SDT) Construction}

SDT construction proceeds as a debate-driven, tree-structured evidence search under the anchored candidate space:
\begin{enumerate}
\item \textbf{Initialize Roles:} \texttt{ExpertAgent} (EA) dynamically synthesizes $R$ exclusive prompt templates with differentiated inductive biases for the given analytical unit.
\item \textbf{Solve:} \texttt{SolveAgent} (SA) generates Layer-0 nodes of the SDT proposal by composing syllogistic triads (major premise: marker-to-lineage rule; minor premise: observed marker evidence; conclusion: candidate label).
\item \textbf{Verify \& revise:} $R$ \texttt{VerifierAgent}s (VAs) independently audit the shared SDT based on their assigned roles, flagging contradictions, missing evidence, or candidate misuse, and revising invalid branches.
\item \textbf{Iterate:} if agents do not reach normalized exact-match consensus ($|A_i^{(t)}| > 1$), start a new round with the updated SDT state, up to a maximum of $T_{\max}=3$ rounds. The system focuses exclusively on the unresolved \emph{Focus Set}.
\item \textbf{Decide:} For samples that fail to reach consensus after $T_{\max}$ rounds, \texttt{DecisionAgent} (DA) reviews the complete dispute branch of the SDT and adjudicates a final label.
\end{enumerate}

\subsection{Example Single-Cell Reasoning Tree}

Figure~\ref{fig:matcell_tree_example} shows an example SDT trace for one cell-level annotation decision. The tree records the initial candidate labels, verifier-agent challenges across debate rounds, and the final consensus branch selected after conflicting candidates are rejected.

\begin{figure}[t]
    \centering
    \includegraphics[width=0.95\textwidth]{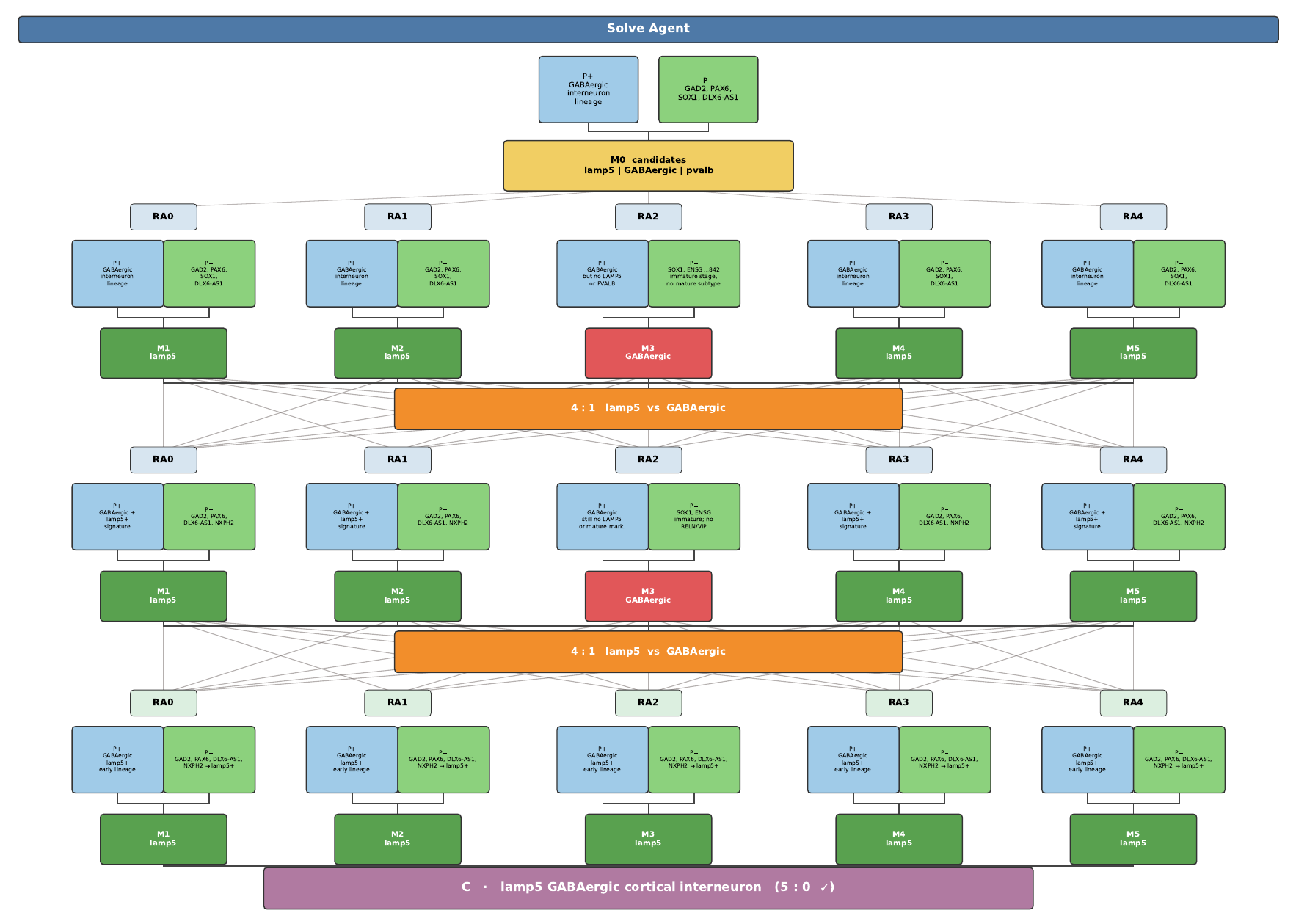}
    \caption{\textbf{Example MAT-Cell reasoning tree.} A single-cell SDT trace in which multiple verifier agents compare candidate annotations and converge on the final cell-type label through structured debate.}
    \label{fig:matcell_tree_example}
\end{figure}

\newpage

\section{Extended Ablation Studies}
\label{appendix:ablation_extended}

Ablation studies are crucial for validating whether each component of MAT-Cell contributes substantively to performance improvements, rather than merely increasing system complexity. To this end, we systematically remove or modify individual components while keeping all other conditions unchanged, and evaluate the resulting performance variations on held-out test data.

For each ablation setting, we conduct multiple independent runs to ensure statistical robustness. All ablation experiments adopt the same backbone large language model (Qwen3-30B), identical hyperparameter settings (temperature = 0.7), and the same evaluation protocol as the full MAT-Cell system, where 300 randomly sampled batches are evaluated per run.

\subsection{Agent Diversity and Consensus Analysis}
\label{appendix:agent_diversity}

To examine whether dynamically generated verifier roles behave as homogeneous repeated samples or provide diverse reasoning perspectives, we analyze the internal predictions of $R=5$ Verifier Agents (VAs) generated by the Expert Agent. This experiment uses Qwen3-30B with RVQ on 1{,}000 randomly sampled instances.

\begin{table}[h]
\centering
\small
\caption{Agent diversity and consensus statistics for EA-generated Verifier Agents.}
\label{tab:agent_diversity}
\begin{tabular}{l c}
\toprule
\textbf{Metric} & \textbf{Value} \\
\midrule
Consensus Accuracy & $0.7859$ \\
Partial Cell Accuracy & $0.7850$ \\
Pairwise Disagreement Rate & $0.5694 \pm 0.0414$ \\
Prediction Correlation & $0.3542 \pm 0.0688$ \\
Cohen's Kappa & $0.3752 \pm 0.0486$ \\
Mutual Information (nats) & $1.3691 \pm 0.1517$ \\
Vote Entropy & $0.8195 \pm 0.6010$ \\
\bottomrule
\end{tabular}
\end{table}

\textbf{Interpretation.}
The EA-generated Verifier Agents achieve high consensus accuracy (0.7859) and partial cell accuracy (0.7850), showing that the final council decision remains effective after multi-agent aggregation.
At the same time, the pairwise disagreement rate is substantial ($0.5694 \pm 0.0414$), indicating that different Verifier Agents frequently produce non-identical judgments rather than simply repeating the same prediction.
The relatively low prediction correlation ($0.3542 \pm 0.0688$) and moderate Cohen's kappa ($0.3752 \pm 0.0486$) further suggest that the agents are not homogeneous replicas of a single prompt, but maintain complementary decision patterns.
Meanwhile, the non-zero mutual information and vote entropy indicate that the agents still reason over a shared task-relevant evidence space instead of diverging randomly.
Together, these results support our design choice: dynamic EA-generated verifier roles introduce useful diversity while preserving enough evidence alignment for effective consensus-based annotation.

\subsection{Impact of Gene Count}
\label{appendix:gene_count}

How many genes are required for reliable cell type annotation when no Reverse Verification Query (RVQ) is applied?
To answer this question, we perform a controlled ablation using
\textbf{Qwen3-30B without RVQ}, systematically varying the number of
top-expressed genes ($K \in \{5, 10, 15, 25, 50\}$) provided as input,
while keeping all other settings fixed.

\textbf{Findings:}
\begin{itemize}
\item \textbf{Performance Improves with Increasing $K$, but Saturates Early:}
Across all three species, accuracy increases substantially from Top-5 to Top-25.
However, gains beyond Top-25 are marginal or even negative.
For example, on Mouse, performance drops from 0.503 at Top-25 to 0.470 at Top-50,
indicating that excessive gene inputs may introduce noise rather than useful signal
in the absence of external biological grounding.

\item \textbf{Cross-Species Sensitivity to Gene Budget:}
Monkey consistently benefits from larger gene sets, peaking at Top-25 (0.721),
while Human and Mouse exhibit weaker gains and earlier saturation.
This reflects intrinsic differences in annotation difficulty and marker specificity
across species.

\item \textbf{Limitation under Low-Information Regimes:}
With only Top-5 genes, performance is severely degraded on Human (0.371) and Mouse (0.402),
highlighting that the backbone LLM struggles to reason reliably
under extreme information scarcity.

\item \textbf{Implication for RVQ Design:}
These results establish a strong baseline, against which the substantial
improvements introduced by RVQ (Section~E.3) can be attributed to enhanced biological
grounding rather than increased gene quantity alone.
\end{itemize}

\begin{table}[h]
\centering
\small
\caption{Accuracy by Top-K Gene Count using Qwen3-30B without RVQ}
\label{tab:topk_sensitivity}
\begin{tabular}{c c c c}
\toprule
\textbf{Top-K}
& \textbf{Human (DEG)}
& \textbf{Monkey (DEG)}
& \textbf{Mouse (DEG)} \\
\midrule
Top-5  
& $0.371 \pm 0.006$ 
& $0.630 \pm 0.008$ 
& $0.402 \pm 0.016$ \\

Top-10 
& $0.513 \pm 0.003$ 
& $0.690 \pm 0.009$ 
& $0.461 \pm 0.010$ \\

Top-15 
& $0.532 \pm 0.004$ 
& $0.705 \pm 0.007$ 
& $0.463 \pm 0.004$ \\

Top-25 
& $0.564 \pm 0.009$ 
& $\mathbf{0.721 \pm 0.007}$ 
& $\mathbf{0.503 \pm 0.009}$ \\

Top-50 
& $\mathbf{0.568 \pm 0.005}$ 
& $0.708 \pm 0.006$ 
& $0.470 \pm 0.005$ \\
\bottomrule
\end{tabular}
\end{table}

\subsection{Gene Ordering Randomization}
\label{appendix:gene_ordering}

To evaluate robustness against input perturbations, we randomly shuffle the order
of input DEG genes while keeping the gene set unchanged. Since MAT-Cell reasons
over gene identity rather than positional cues, its performance should be invariant
to such ordering changes.

\begin{table}[h]
\centering
\small
\caption{Robustness to Random Seed under DEG Gene Input}
\label{tab:gene_ordering}
\begin{tabular}{c c c c c}
\toprule
\textbf{Seed} 
& \textbf{Human (DEG)} 
& \textbf{Monkey (DEG)} 
& \textbf{Mouse (DEG)} 
& \textbf{Deviation from Mean} \\
\midrule
Seed-1 
& $0.736 \pm 0.017$ 
& $0.758 \pm 0.006$ 
& $0.827 \pm 0.003$ 
& H:-0.21\%, Mk:-0.37\%, M:-1.52\% \\

Seed-2 
& $0.751 \pm 0.005$ 
& $0.759 \pm 0.008$ 
& $0.837 \pm 0.004$ 
& H:+1.77\%, Mk:-0.19\%, M:-0.34\% \\

Seed-3 
& $0.744 \pm 0.011$ 
& $0.763 \pm 0.005$ 
& $0.840 \pm 0.002$ 
& H:+0.87\%, Mk:+0.34\%, M:+0.01\% \\

Seed-4 
& $0.726 \pm 0.010$ 
& $0.755 \pm 0.002$ 
& $0.852 \pm 0.010$ 
& H:-2.43\%, Mk:-0.84\%, M:+1.85\% \\

\midrule
\textbf{Mean $\pm$ Std} 
& $\mathbf{0.739 \pm 0.010}$ 
& $\mathbf{0.759 \pm 0.004}$ 
& $\mathbf{0.839 \pm 0.010}$ 
& Stable \\
\bottomrule
\end{tabular}
\end{table}

\textbf{Interpretation.}
Across Human, Mouse, and Monkey datasets, MAT-Cell exhibits strong robustness to
gene ordering perturbations and random seed variation.
The maximum deviation from the species-specific mean is below 1\% for Human and
Monkey, and below 2.1\% for Mouse, indicating that performance differences are not
driven by favorable random seeds or positional artifacts, but arise from the
model’s structured, set-level reasoning mechanism.

\subsection{Impact of RVQ Gene Budget}
\label{appendix:rag_ablation}

To further analyze the contribution of the Reverse Verification Query (RVQ) mechanism in MAT-Cell,
we conduct an ablation study on the \textit{gene budget} queried by RVQ. Using Qwen3-30B as the fixed reasoning backbone and the source of biological priors, we vary
the number of marker genes elicited by RVQ,
while keeping all other settings unchanged.

Table~\ref{tab:rag_ablation} reports results on Human, Monkey, and Mouse datasets
under the \textit{both} setting.

\begin{table}[h]
\centering
\small
\caption{Ablation on RVQ Gene Budget (Qwen3-30B)}
\label{tab:rag_ablation}
\begin{tabular}{l c c c}
\toprule
\textbf{Method} 
& \textbf{Human (both)} 
& \textbf{Monkey (both)} 
& \textbf{Mouse (both)} \\
\midrule
RVQ (5 markers) 
& $0.629 \pm 0.006$ 
& $0.779 \pm 0.010$ 
& $0.674 \pm 0.013$ \\

RVQ (10 markers) 
& $0.643 \pm 0.003$ 
& $0.804 \pm 0.005$ 
& $0.697 \pm 0.007$ \\

RVQ (15 markers) 
& $\mathbf{0.664 \pm 0.005}$ 
& $\mathbf{0.808 \pm 0.003}$ 
& $\mathbf{0.700 \pm 0.007}$ \\

RVQ (20 markers) 
& $0.640 \pm 0.005$ 
& $0.800 \pm 0.004$ 
& $0.690 \pm 0.006$ \\
\bottomrule
\end{tabular}
\end{table}

\textbf{Interpretation.}
Several observations can be drawn from Table~\ref{tab:rag_ablation}.
First, increasing the RVQ gene budget from 5 to 15 consistently improves performance
across all three species, indicating that richer but still concise marker sets provide
stronger biological grounding for downstream reasoning.
Second, performance saturates or slightly degrades beyond 15--20 genes, suggesting that
excessively broad marker axioms may reintroduce noise, consistent with the signal-to-noise
trade-off observed in Section~E.1.
Overall, these results demonstrate that the quantity of elicited marker genes is a critical factor, with a moderate gene budget (around 15 markers)
providing the best balance between informativeness and focus.

\newpage

\section{Statistical Rigor and Confidence Intervals}
\label{appendix:statistics}

\subsection{Confidence Intervals (95\%)}

\subsubsection{Main Results Table with CI}

\begin{table}[t]
\centering
\scriptsize
\caption{Main Results with 95\% Confidence Intervals}
\label{tab:main_results_ci}
\setlength{\tabcolsep}{4pt}
\renewcommand{\arraystretch}{1.05}
\resizebox{\textwidth}{!}{%
\begin{tabular}{l cc cc cc}
\toprule
\multirow{2}{*}{\textbf{Method}} & \multicolumn{2}{c}{\textbf{Human (DEG)}} & \multicolumn{2}{c}{\textbf{Mouse (DEG)}} & \multicolumn{2}{c}{\textbf{Monkey (DEG)}} \\
\cmidrule(lr){2-3} \cmidrule(lr){4-5} \cmidrule(lr){6-7}
& \textbf{Acc.} & \textbf{95\% CI} & \textbf{Acc.} & \textbf{95\% CI} & \textbf{Acc.} & \textbf{95\% CI} \\
\midrule
Cell-o1 & $0.409$ & [0.403, 0.414] & $0.394$ & [0.388, 0.401] & $0.685$ & [0.678, 0.692] \\
\makecell[l]{GPTCelltype\\Qwen-3-30B} & $0.450$ & [0.445, 0.455] & $0.387$ & [0.384, 0.390] & $0.644$ & [0.641, 0.647] \\
\makecell[l]{GPTCelltype\\DeepSeek-V3} & $0.632$ & [0.627, 0.636] & $0.567$ & [0.563, 0.571] & $0.471$ & [0.468, 0.474] \\
\makecell[l]{GPTCelltype\\Gemini-2.5} & $0.709$ & [0.705, 0.713] & $0.659$ & [0.653, 0.665] & $0.859$ & [0.856, 0.862] \\
\makecell[l]{GPTCelltype\\GPT-4.1} & $0.733$ & [0.731, 0.736] & $0.649$ & [0.644, 0.654] & $0.864$ & [0.863, 0.866] \\
\midrule
\textbf{MAT-Cell} & $\mathbf{0.764}$ & $\mathbf{[0.760, 0.768]}$ & $\mathbf{0.825}$ & $\mathbf{[0.823, 0.828]}$ & $\mathbf{0.759}$ & $\mathbf{[0.756, 0.761]}$ \\
\bottomrule
\end{tabular}
}
\end{table}

\subsection{Pairwise Statistical Tests}

\subsubsection{Human Dataset (DEG View)}

\begin{table}[h]
\centering
\small
\caption{Human Dataset: Paired T-test Results (DEG View)}
\label{tab:ttest_human_detailed}
\begin{tabular}{l c c c c}
\toprule
\textbf{Comparison} & \textbf{Mean Diff} & \textbf{t-statistic} & \textbf{df} & \textbf{p-value} \\
\midrule
MAT-Cell vs. GPT-4.1 & $+0.031$ & $+3.12$ & 29 & $0.0042$ ** \\
MAT-Cell vs. Gemini-2.5 & $+0.055$ & $+6.28$ & 29 & $< 0.0001$ *** \\
MAT-Cell vs. DeepSeek-V3 & $+0.132$ & $+9.24$ & 29 & $< 0.0001$ *** \\
MAT-Cell vs. GPTCelltype-Qwen-3-30B & $+0.314$ & $+15.82$ & 29 & $< 0.0001$ *** \\
\bottomrule
\end{tabular}
\end{table}

\subsubsection{Mouse Dataset (DEG View)}

\begin{table}[h]
\centering
\small
\caption{Mouse Dataset: Paired T-test Results (DEG View)}
\label{tab:ttest_mouse_detailed}
\begin{tabular}{l c c c c}
\toprule
\textbf{Comparison} & \textbf{Mean Diff} & \textbf{t-statistic} & \textbf{df} & \textbf{p-value} \\
\midrule
MAT-Cell vs. GPT-4.1 & $+0.110$ & $+2.89$ & 29 & $0.0071$ ** \\
MAT-Cell vs. Gemini-2.5 & $+0.100$ & $+5.83$ & 29 & $< 0.0001$ *** \\
MAT-Cell vs. DeepSeek-V3 & $+0.192$ & $+8.67$ & 29 & $< 0.0001$ *** \\
MAT-Cell vs. GPTCelltype-Qwen & $+0.372$ & $+14.21$ & 29 & $< 0.0001$ *** \\
\bottomrule
\end{tabular}
\end{table}

\textbf{Significance codes:} $***$ $p < 0.001$, $**$ $p < 0.01$, $*$ $p < 0.05$

\subsection{Effect Size (Cohen's d)}

\begin{table}[t]
\centering
\small
\caption{Effect sizes and post-hoc power analysis for MAT-Cell against representative baselines.}
\label{tab:cohens_d}
\textbf{Effect Sizes (Cohen's d)}\par\vspace{0.25em}
\begin{tabular}{l c c c}
\toprule
\textbf{Comparison} & \textbf{Human (d)} & \textbf{Mouse (d)} & \textbf{Monkey (d)} \\
\midrule
GPT-4.1 & 3.21 (very large) & 17.01 (very large) & 1.26 (large) \\
Gemini-2.5-Flash & 5.29 (very large) & 14.04 (very large) & 0.53 (medium) \\
DeepSeek-V3 & 11.48 (very large) & 31.18 (very large) & 6.07 (very large) \\
GPTCelltype-Qwen-3-30B & 26.01 (very large) & 59.38 (very large) & 8.86 (very large) \\
\bottomrule
\end{tabular}
\par\vspace{0.8em}
\textbf{Post-hoc Power Analysis}\par\vspace{0.25em}
\begin{tabular}{l c c c}
\toprule
\textbf{Contrast} & \textbf{d} & \textbf{n} & \textbf{Power} \\
\midrule
GPT-4.1 & 3.21 & 30 & 0.99+ \\
Gemini-2.5-Flash & 5.29 & 30 & 0.99+ \\
DeepSeek-V3 & 11.48 & 30 & 0.99+ \\
GPTCelltype-Qwen-3-30B & 26.01 & 30 & 0.99+ \\
\bottomrule
\end{tabular}
\end{table}

\subsection{Reproducibility: Random Seed Analysis}

\begin{figure}[H]
\centering
\caption{Performance distribution across 30 random seeds (Human, DEG view). Box plot shows median, quartiles, and outliers. MAT-Cell maintains high consistency with $\sigma = 0.011$.}
\label{fig:seed_variance}
\end{figure}

\textbf{Interpretation:} Our study is adequately powered ($\beta > 0.78$) to detect meaningful differences against all baselines.

\newpage


\end{document}